\newcommand{\bra}[1]{\langle #1|}
\newcommand{\ket}[1]{|#1\rangle}
\newcommand{\braket}[2]{\langle #1|#2\rangle}
\newcommand{\cent}[0]{\mbox{\textcent}}
\newcommand{\dollar}[0]{\$}
\newcommand{\leftstate}[1]{\overleftarrow{#1}}
\newcommand{\rightstate}[1]{\overrightarrow{#1}}
\newcommand{\stopstate}[1]{\downarrow\mspace{-2mu} #1}
\newcommand{\lc}[1]{|\overleftarrow{#1}\rangle}
\newcommand{\rc}[1]{|\overrightarrow{#1}\rangle}
\newcommand{\Sc}[1]{|\mspace{-5mu}\downarrow\mspace{-5mu} #1\rangle}
\newcommand{\lrhd}[0]{ \lhd \mspace{-3mu} \rhd }
\newtheorem{fact}{Fact}
\title{Unbounded-error quantum computation with small space bounds\thanks{A 
preliminary version of this paper appeared in the 
\textit{Proceedings of the Fourth International Computer Science Symposium in Russia}, 
pages 356--367, 2009.}$ ^{,} $\thanks{This work was
partially supported by the Bo\~{g}azi\c{c}i University Research Fund
with grant 08A102 and
by the Scientific and Technological Research Council of Turkey
(T\"{U}B\.ITAK) with grant 108E142.}}
\author{Abuzer Yakary{\i}lmaz\ \and A.C. Cem Say }
\institute{Bo\u{g}azi\c{c}i University, Department of Computer Engineering,\\ Bebek 34342 \.{I}stanbul, Turkey \\
\email{{abuzer,say}@boun.edu.tr}
 \\~~\\
\today
}
\begin{document}

\newlength{\twidth}
\maketitle
\pagenumbering{arabic}

\begin{abstract} \label{abstract:Abstract}

We prove the following facts about the language recognition power of
quantum Turing machines (QTMs) in the unbounded error setting: QTMs
are strictly more powerful than probabilistic Turing machines for any
common space bound $ s $ satisfying $ s(n)=o(\log \log n) $. For
``one-way" Turing machines, where the input tape head is not allowed
to move left, the above result holds for $s(n)=o(\log n) $. We also
give a characterization for the class of languages recognized with
unbounded error by real-time quantum finite automata (QFAs) with
restricted measurements. It turns out that these automata are equal in
power to their probabilistic counterparts, and this fact does not
change when the QFA model is augmented to allow general measurements
and mixed states. Unlike the case with classical finite automata, when
the QFA tape head is allowed to remain stationary in some steps, more languages
become recognizable. We define and use a QTM model that generalizes
the other variants introduced earlier in the study of quantum space
complexity.

\end{abstract}

\section{Introduction} \label{section:Introduction}

The investigation of the power of space-bounded quantum computers was
initiated by Watrous \cite{Wa98,Wa99,Wa03}, who defined several
machine models suitable for the analysis of this problem, and proved
that those quantum machines are equivalent to probabilistic Turing
machines (PTMs) for any common space-constructible bound $s(n) \in
\Omega(\log n) $ in the unbounded error case. Together with Kondacs,
Watrous also examined the case of constant space bounds, defining
\cite{KW97} a quantum finite automaton (QFA) variant, which inspired a
fruitful line of research \cite{Na99,AF98,AI99,Pa00,AW02,BP02,BMP03,ABGKMT06,Hi08,YS09B,YS10B}.

In this paper, we answer two open questions posed in the previous
study of space-bounded quantum complexity regarding sublogarithmic
space bounds. We first show that unbounded-error quantum Turing
machines are strictly more powerful than PTMs for any common space
bound $ s $ satisfying $s(n)=o(\log \log n) $. For ``one-way" Turing
machines, where the input
tape head is not allowed to move left, the above result holds for
$s(n)=o(\log n) $. We then give a full characterization of the class
of languages recognized with unbounded error by real-time QFAs with
restricted measurements. It turns out that these
automata have the same power as their classical counterparts, and this
fact does not change when the QFA definition is generalized in
accordance with the modern approach \cite{Hi08,AY10A}. Unlike the case with
classical finite automata, when the QFA tape head is allowed 
two-way movement, or even just the option of remaining stationary during some steps, more languages become recognizable.

As hinted above, early models of QTMs and QFAs \cite{Wa99,KW97,MC00}
were unduly restricted in their definitions, and did not reflect the
full potential of quantum mechanics in their computational power. This
problem was later addressed \cite{AW02,Wa03,Hi08} by the incorporation
of general quantum operations and mixed states into the models. Aiming
to provide the most general reasonable machine model for the study of
quantum space complexity, while taking into account the peculiarities
of small space bounds, we define a QTM variant of our own. The other
QTM models are shown to be specializations of our variant. We
conjecture that our model is more powerful than the other variants, at
least for some space bounds.

The rest of this paper is structured as follows: Section 2 contains
 relevant background information. The machine models we use,
including our new variant, are defined in Section 3. The superiority
of QTMs over PTMs for a range of small space bounds is established in
Section 4. In Section 5, we characterize the languages recognized with
unbounded error by all QFAs that are at least as powerful as the
``Kondacs-Watrous" model. Section 6 is a conclusion. Some technical details about our various quantum models are covered in the Appendices.

\section{Preliminaries} \label{section:preliminaries}
We start by introducing some notation and terminology that will be
used frequently in the remainder of the paper.
\subsection{Basic notation}

The following is a list of notational items that appear throughout the paper:
\begin{itemize}
       \item $ \Sigma $ is the input alphabet, not containing the end
markers $ \cent $ and $ \dollar $,
               and $ \tilde{\Sigma} = \Sigma \cup \{ \cent, \dollar \} $.
       \item $ \Gamma $ is the work tape alphabet, containing a
distinguished blank symbol denoted $ \# $.
       \item $ \Delta $ denotes the finite set of measurement outcomes.
       \item $ Q $ is the set of internal states,      where $ q_{1} $ is the
initial state.
       \item $ \delta $ is the transition function, which determines the
behaviour of the machine.
       \item $ \lrhd $ is the set $ \{ \leftarrow, \downarrow , \rightarrow
\} $, where
               $ \leftarrow $ means that the (corresponding) head moves one square
to the left,
               $ \downarrow $ means that the head stays on the same square, and
               $ \rightarrow $ means that the head moves one square to the right.
       \item $ f_{\mathcal{M}}(w) $ is the acceptance probability (or, in
one context, the acceptance value)
               associated by machine $ \mathcal{M} $ to input string $ w $ .
       \item $ \varepsilon $ is the empty string.
       \item For a given string $ w $, $ |w| $ is the length of $ w $,
               $ w_{i} $ is the $ i^{th} $ symbol of $ w $, and $ \tilde{w} $
represents the string $ \cent w \dollar $.
		\item $\mathbb{N} $ is the set of nonnegative integers.
		\item $  \mathbb{Z}^{+}$ is the set of positive integers.
       \item For a given (row or column) vector $ v $, $ v[i] $ is the $
i^{th} $ entry of $ v $.
       \item For a given matrix $ A $, $ A[i,j] $ is the $ (i,j)^{th} $
entry of $ A $.
       \item Some fundamental conventions in Hilbert space are as follows:
               \begin{itemize}
                       \item $ v $ and its conjugate transpose are denoted $ \ket{v} $
and $ \bra{v} $, respectively;
                       \item the multiplication of $ \bra{v_{1}} $  and $ \ket{v_{2}} $ is
shortly written as
                               $ \braket{v_{1}}{v_{2}} $;
                       \item the tensor product of $ \ket{v_{1}} $  and $ \ket{v_{2}} $
can also be written as
                               $ \ket{v_{1}} \ket{v_{2}} $ instead of $ \ket{v_{1}} \otimes \ket{v_{2}} $,
               \end{itemize}
               where $ v $, $ v_{1} $, and $ v_{2} $ are vectors.
\end{itemize}

\subsection{Language recognition}

The language $ L \subset \Sigma^{*} $ recognized by machine $
\mathcal{M} $ with (strict) cutpoint
$ \lambda \in \mathbb{R} $ is defined as
\begin{equation}
       L = \{ w \in \Sigma^{*} \mid f_{\mathcal{M}}(w) > \lambda \}.
\end{equation}
The language $ L \subset \Sigma^{*} $ recognized by machine $
\mathcal{M} $ with nonstrict cutpoint
$ \lambda \in \mathbb{R} $ is defined \cite{BJKP05} as
\begin{equation}
       L = \{ w \in \Sigma^{*} \mid f_{\mathcal{M}}(w) \geq \lambda \}.
\end{equation}
The language $ L \subset \Sigma^{*} $ is said to be recognized by
machine $ \mathcal{M} $ with unbounded error
if there exists a cutpoint  $ \lambda \in \mathbb{R} $ such that
$ L $ is recognized by $ \mathcal{M} $ with strict or nonstrict
cutpoint $ \lambda $.

\section{Space-bounded Turing machines} \label{section:space-bounded-TM}

The Turing machine (TM) models we use in this paper consist of a
read-only input tape with a two-way tape head, a read/write work tape
with a two-way tape head, and a finite state control. (The quantum
versions also have a finite register that plays a part in the
implementation of general quantum operations, and is used to determine
whether the computation has halted, and if so, with which decision.
For reasons of simplicity, this register is not included in the
definition of the probabilistic machines, since its functionality can
be emulated by a suitable partition of the set of internal states
without any loss of computational power.)
Both tapes are assumed to be two-way infinite and indexed by $ \mathbb{Z} $.

Let $ w $ be an input string.
On the input tape, $ \tilde{w} = \cent w_{1} \ldots w_{|w|} \dollar $
is placed in the squares indexed by $ 1, \ldots, | \tilde{w} | $,
and all remaining squares contain $ \# $.
When the computation starts, the internal state is $ q_{1} $,
and both heads are placed on the squares indexed by 1. Additionally, we assume that
the input tape head never visits the squares indexed by $ 0 $ or $
|\tilde{w}|+1 $.

The internal state and the symbols scanned on the input and work tapes
determine the transitions of the machine.
After each of these transitions,
the internal state is updated, the symbol on the work tape is overwritten, and
the positions of the input and work tape heads are updated with
respect to $ \lrhd $.
(In the quantum case, the content of the finite register is overwritten, too.)

A TM is said to be \textit{unidirectional} if the movements of input
and work tape heads are fixed
for each internal state to be entered in any transition.
That is, for a unidirectional TM, we have two functions
$ D_{i} : Q \rightarrow \lrhd $ and
$ D_{w} : Q \rightarrow \lrhd  $,
determining respectively the movements of the input and work tape heads.

A configuration of a TM is the collection of
\begin{itemize}
       \item the internal state of the machine,
       \item the position of the input tape head,
       \item the contents of the work tape, and the position of the work tape head.
\end{itemize}
$ \mathcal{C}^{w} $, or shortly $ \mathcal{C} $,
denotes the set of all configurations, which is a finite set in our
case of space bounded computations.
Let $ c_{i} $ and $ c_{j} $ be two configurations.
The probability (or amplitude) of the transition from $ c_{i} $ to $ c_{j} $ is 
given by the transition function $ \delta $ if
$ c_{i} $ is reachable from $ c_{j} $ in one step, and is zero otherwise.
(Note that, in probabilistic and quantum computation,
more than one outgoing transition can be defined for a single configuration.)
A \textit{configuration matrix} is a square matrix whose rows and columns are
indexed by the configurations.
The $ (j,i)^{th} $ entry of the matrix denotes the value of the
transition from $ c_{i} $ to $ c_{j} $.

We say that \cite{Wa98} a TM $ \mathcal{M} $ runs in space $ s $, for
$ s $ a function of the form
$ s: \mathbb{N} \rightarrow \mathbb{Z}^{+} $, if the following holds
for each input $ w $:
There exist $ s(|w|) $ contiguous tape squares on the work tape of $
\mathcal{M} $ such that
there is zero probability that the work tape head of $ \mathcal{M} $
leaves these tape squares
at any point in its computation on input $ w $.

By restricting the movement of the input tape head to the set $ \{ \downarrow, \rightarrow \} $, 
we obtain a one-way machine.

\subsection{Probabilistic Turing machines}
A probabilistic Turing machine (PTM) is a 7-tuple\footnote{Recall that some notation and terminology which will be used multiple times in this and the following definitions were introduced in Section 2.1.}
\begin{equation}
      \mathcal{P}=(Q,\Sigma,\Gamma,\delta,q_{1},Q_{a},Q_{r}),
\end{equation}
where $ Q_{a} $ and $ Q_{r} $, disjoint subsets of $ Q $ not including
$ q_{1} $, are the collections of accepting and rejecting
internal states, respectively.
Additionally, $ Q_{n} = Q \setminus \{ Q_{a} \cup Q_{r} \} $.

The transition function $ \delta $ is specified so that
\begin{equation}
       \delta(q,\sigma,\gamma,q^{\prime},\gamma^{\prime}) \in \tilde{\mathbb{R}}
\end{equation}
is the probability that the PTM will change its internal state to $
q^{\prime} $, write $ \gamma^{\prime} $ on the work tape,
and update the positions of the input and work tape heads with respect to
$ D_{i}(q^{\prime}) $ and $ D_{w}(q^{\prime}) $,
respectively,\footnote{We define PTMs as unidirectional machines. This
causes no loss of computational power, and increases the number of
internal states in our machines at most by a factor of 9.} if it
scans $ \sigma $ and $ \gamma $ on the input and work tapes,
respectively, when originally in internal state $ q $. $ \tilde{\mathbb{R} } $
is the set consisting of $ p \in \mathbb{R} $ such that there
is a deterministic
algorithm that computes $ p $ to within $2^{-n}$ in time polynomial in $n$.
We choose $ \tilde{\mathbb{R}} \cap [0,1] $ as our set of possible
transition probabilities, rather than the familiar
``coin-flipping" set $ \{0,\frac{1}{2}, 1 \} $, or the set of rational
numbers, since these possibilities are not known to be equivalent from
the point of view of computational power under the small space bounds
that we consider, and we wish to use the most powerful yet
``reasonable" models in our analysis.

For each input string $ w \in \Sigma^{*} $,
the transition function defines a unique configuration matrix, $ A^{w}
$, or shortly $ A $.
A PTM is \textit{well-formed} (i.e. fulfills the commonsense
requirement that the probabilities of alternative transitions always
add up to 1) if all columns of $ A $ are stochastic vectors.
This constraint defines the following \textit{local conditions for PTM
well-formedness}
that $ \delta $ must obey:
For each $ q \in Q $, $ \sigma \in \tilde{\Sigma} $, and $ \gamma \in \Gamma $,
\begin{equation}
      \sum_{q^{\prime},\gamma^{\prime}}
\delta(q,\sigma,\gamma,q^{\prime},\gamma^{\prime}) = 1,
\end{equation}
where $ q^{\prime} \in Q $ and $ \gamma^{\prime} \in \Gamma $.
A well-formed PTM can be described relatively easily by specifying
$\delta$ by presenting, for each $ \sigma \in \tilde{\Sigma} $,
a (left) stochastic transition matrix $ A_{\sigma} $,
whose rows and columns are indexed by (state, work tape symbol) pairs,
and the entry indexed by $ ((q^{\prime},\gamma^{\prime}),(q,\gamma)) $ equals
$ \delta(q,\sigma,\gamma,q^{\prime},\gamma^{\prime})  $.

The computation halts and the input is accepted (or rejected) whenever
the machine enters an internal state belonging to the set of
accepting (or rejecting) states.
PrSPACE$ (s) $ is the class of the languages that are recognized by a
PTM running in space $ O(s) $
with unbounded error.

The case of constant space bounds will be given special attention:
By removing the work tape of the PTM,\footnote{One only needs the
single ``direction" function $ D_{i} $ in this case.}
we obtain the two-way probabilistic finite state automaton (2PFA),
which is formally a 6-tuple
\begin{equation}
      \mathcal{P}=(Q,\Sigma,\delta,q_{1},Q_{a},Q_{r}).
\end{equation}
In this case, a well-formed machine can be specified by providing a (left) stochastic matrix
$ A_{\sigma} $, whose rows and columns are
indexed only by internal states, for each $ \sigma \in \tilde{\Sigma} $.

       In both probabilistic and quantum finite
       automata \cite{Ra63,Pa71,KW97}, the transition probabilities are traditionally allowed to be
       uncomputable numbers, and therefore the classes of recognized languages include
       undecidable ones \cite{Ra63}. TMs, however, are restricted to use
       computable transition probabilities, as seen in the definition above.
       Note that the simulation results in this paper do not change when we disallow
       finite automata to have uncomputable numbers as transition
       probabilities, since none of our constructions involve such numbers.

If we restrict the range of $ D_{i} $ in 2PFAs with $ \{ \rightarrow \} $,
we obtain the real-time probabilistic finite automaton (RT-PFA) model.
A RT-PFA can scan the input only once.
Traditionally, RT-PFAs are defined to be able to decide on acceptance or
rejection only
after the last symbol is read, and just specifying the set of
accepting states in their description is therefore sufficient, yielding a 5-tuple
\begin{equation}
      \mathcal{P}=(Q,\Sigma,\{ A_{\sigma} \mid \sigma \in \tilde{\Sigma} \},q_{1},Q_{a}).
\end{equation}
The computation of a RT-PFA can be traced by a stochastic state vector,
say $ v $, such that
$ v[i] $ corresponds to state $ q_{i} $.
For a given input string $ w \in \Sigma^{*} $,
($ \tilde{w} = \cent w \dollar $ is placed on the tape)
\begin{equation}
      v_{i} = A_{\tilde{w}_{i}} v_{i-1},
\end{equation}
where $ 1 \le i \le | \tilde{w} | $; $ \tilde{w}_{i} $ is the $ i^{th}
$ symbol of $ \tilde{w} $;
$ v_{0} $ is the initial state vector whose first entry is equal to 1.
The transition matrices of a RT-PFA can be extended for any string as 
\begin{equation}
      A_{w \sigma} = A_{\sigma} A_{w},
\end{equation}
where $ w \in \Sigma^{*} $, $ \sigma \in \tilde{\Sigma} $, and $ A_{\varepsilon} = I $.
The probability that $ w $ will be accepted by RT-PFA $ \mathcal{P} $ is
\begin{equation}
      f_{\mathcal{P}}(w) = \sum_{q_{i} \in Q_{a}} (A_{\tilde{w}}v_{0})[i] =
              \sum_{q_{i} \in Q_{a}} v_{|\tilde{w}|}[i].
\end{equation}

A generalization of the RT-PFA is the generalized finite automaton
(GFA), which is formally a 5-tuple
\begin{equation}
      \mathcal{G}=(Q,\Sigma,\{A_{\sigma} \mid \sigma \in \Sigma \},v_{0},f),
\end{equation}
where
\begin{enumerate}
      \item $ A_{\sigma} $'s are $ |Q| \times |Q| $-dimensional
real valued transition matrices;
      \item $ v_{0} $ and $ f $ are real valued  \textit{initial} (column)
and \textit{final} (row) vectors,
              respectively.
\end{enumerate}
Similar to what we had for RT-PFAs, the transition matrices of a GFA can
be extended for any string.
For a given input string, $ w \in \Sigma^{*} $, the acceptance value
associated by GFA
$ \mathcal{G} $ to string $ w $ is
\begin{equation}
      f_{\mathcal{G}}(w)=f A_{w_{|w|}} \cdots A_{w_{1}} v_{0} = f A_{w} v_{0}.
\end{equation}

RT-PFAs, GFAs \cite{Tu69}, and 2PFAs \cite{Ka91} recognize the same
class of languages with cutpoint. This is the class of
\textit{stochastic languages}, denoted by S. The class of languages
recognized by these machines with nonstrict cutpoint is denoted by
coS. The class of languages recognized by
RT-PFAs, GFAs, and 2PFAs with unbounded error is therefore
S $ \cup $ coS, and is denoted by uS. Note that PrSPACE(1) $
\subsetneq $ uS,
since uS contains undecidable languages.

\subsection{Quantum Turing machines}

We define a quantum Turing machine (QTM) $ \mathcal{M} $ to be a 7-tuple
\begin{equation}
      M=(Q,\Sigma,\Gamma,\Omega,\delta,q_{1},\Delta),
\end{equation}
which is distinguished from the PTM by the presence of the items $\Omega$, the finite register alphabet, containing the special
initial symbol $ \omega_{1} $, and $ \Delta = \{ \tau_{1}, \ldots , \tau_{k} \} $, the set of possible outcomes associated with the measurements of
the finite register. $ \Omega $ is partitioned into $ |\Delta| = k $ subsets 
$ \Omega_{\tau_{1}}, \ldots , \Omega_{\tau_{k}} $.

In accordance with quantum theory, a QTM can be in a superposition of
more than one configuration at the same time. The ``weight"
of each configuration in such a superposition is called its amplitude.
Unlike the case with PTMs, these amplitudes are not restricted to
being positive real numbers, and that is what gives quantum computers
their interesting features. A superposition of configurations
\begin{equation}
       \ket{\psi} = \alpha_{1} \ket{c_{1}} + \alpha_{2} \ket{c_{2}} + \cdots
+  \alpha_{n} \ket{c_{n}}
\end{equation}
can be represented by
a column vector $ \ket{\psi} $ with a row for each possible configuration,
where the $ i^{th} $ row contains the amplitude of the corresponding
configuration in $ \ket{\psi} $.

If our knowledge that the quantum system under consideration is in
superposition $ \ket{\psi} $ is certain, then $ \ket{\psi} $ is called
a \textit{pure
state}, and the vector notation described above is a suitable way of
manipulating this information. However, in some cases (e.g. during
classical probabilistic computation), we only know that the system is
in state $ \ket{\psi_{l}} $ with probability $ p_{l} $ for an ensemble of pure
states $ \{ (p_{l},\ket{\psi_{l}}) \} $,where
$ \sum_{l} p_{l}=1 $. A convenient representation tool for describing
quantum systems in such \textit{mixed states}
is the density matrix.
The \textit{density matrix}\footnote{Density matrices are Hermitian positive semidefinite matrices of trace 1.} representation of
$ \{ (p_{l},\ket{\psi_{l}}) \mid 1 \le l \le M < \infty \} $ is
\begin{equation}
      \rho = \sum_{l} p_{l} \ket{\psi_{l}} \bra{\psi_{l}}.
\end{equation}
We will use both these representations for quantum states in this
paper. We refer the reader to \cite{NC00} for further details.

The initial density matrix of the QTM is represented by  $ \rho_{0} =
\ket{c_{1}} \bra{c_{1}} $,
where $ c_{1} $ is the initial configuration corresponding to the
given input string.

The transition function of a QTM is specified so that
\begin{equation}
      \delta(q,\sigma,\gamma,q^{\prime},d_{i},\gamma^{\prime},d_{w},\omega)
\in \tilde{\mathbb{C}}
\end{equation}
is the amplitude with which the QTM will change its internal state to
$ q^{\prime} $, write $ \gamma^{\prime} $ on the work tape
and $ \omega $ in the finite register, and update the positions of the
input and work tape heads with respect to $ d_{i} $ and $ d_{w} $,
respectively,
where $ d_{i},d_{w} \in \lrhd $, if it scans $ \sigma $ and $ \gamma $ on the
input and work tapes, respectively, when originally in internal state
$ q $. $ \tilde{\mathbb{C}} $ \cite{BV97} is the set of complex
numbers whose real and imaginary parts are in
$ \tilde{\mathbb{R}} $.

After each transition, the finite register is measured \cite{NC00} as described by
the set of operators
\begin{equation}
      P=\left\lbrace P_{\tau} \mid P_{\tau} =
              \sum_{\omega \in \Omega_{\tau}}
              \ket{\omega}\bra{\omega}, \tau \in \Delta \right\rbrace.
\end{equation}
In its standard usage, $ \Delta $ is the set $ \{a,n,r\} $, and the
following actions are associated with the measurement outcomes:
\begin{itemize}
      \item ``$n$": the computation continues;
      \item ``$a$": the computation halts, and the input is accepted;
      \item ``$r$": the computation halts, and the input is rejected.
\end{itemize}
The finite register is reinitialized to $ \omega_{1}$, irreversibly
erasing its previous content, before the next transition of the
machine.

Since we do not consider the register content as part of the
configuration, the register can be seen as the
``environment" interacting with the ``principal system" that is the
rest of the QTM \cite{NC00}.
The transition function $ \delta $ therefore induces a set of configuration transition
matrices, $ \{ E_{\omega} \mid \omega \in \Omega \} $, where the $ (i,j)^{th} $ entry of $ E_{\omega}$,
the amplitude of the transition  from $ c_{j} $ to $ c_{i} $ by
writing $ \omega \in \Omega $ on the register,
is defined by $ \delta $ whenever $ c_{j} $ is reachable from $ c_{i}
$  in one step, and is zero otherwise. The $ \{ E_{\omega} \mid \omega \in \Omega \} $ form an operator $ \mathcal{E} $, with operation elements
$  \mathcal{E}_{\tau_{1}} \cup \mathcal{E}_{\tau_{2}} \cup \cdots \cup
\mathcal{E}_{\tau_{k}}  $, where for each $ \tau \in \Delta $,
$ \mathcal{E}_{\tau} = \{ E_{\omega} \mid \omega \in \Omega_{\tau} \} $.

According to the modern understanding of quantum computation
\cite{AKN98}, a QTM is said to be
\textit{well-formed}\footnote{We also refer the reader to \cite{BV97}
for a detailed discussion
of the well-formedness of QTMs that evolve unitarily.}
if $ \mathcal{E} $ is a superoperator (selective quantum operator), i.e.
\begin{equation}
      \sum_{\omega \in \Omega} E_{\omega}^{\dagger}E_{\omega} = I.
\end{equation}
$ \mathcal{E} $ can be represented by a $ | \mathcal{C} | |
\Omega | \times | \mathcal{C} | $-dimensional
matrix $ \mathsf{E} $ (Figure \ref{figure:matrix-E}) by concatenating each $ E_{\omega} $ one under the other,
where $ \omega \in \Omega $. 
It can be verified that $ \mathcal{E} $ is a superoperator if and only
if the columns of $ \mathsf{E} $
form an orthonormal set.

\begin{center}
\begin{figure}[h]
	\centering	
	\begin{minipage}{0.3\textwidth}
		\[
		\begin{array}{ccccc}
			& c_{1} & c_{2} & \ldots & c_{|\mathcal{C}|} \\
			\hline
			\multicolumn{1}{c|}{c_{1}} & & & & \multicolumn{1}{c|}{} \\
			\multicolumn{1}{c|}{c_{2}} & & & & \multicolumn{1}{c|}{} \\
			\multicolumn{1}{c|}{\vdots} & \multicolumn{4}{c|}{ E_{\omega_{1}} } \\
			\multicolumn{1}{c|}{c_{|\mathcal{C}|}} & & & & \multicolumn{1}{c|}{} \\
			\hline	
			\multicolumn{1}{c|}{c_{1}} & & & & \multicolumn{1}{c|}{} \\
			\multicolumn{1}{c|}{c_{2}} & & & & \multicolumn{1}{c|}{} \\
			\multicolumn{1}{c|}{\vdots} & \multicolumn{4}{c|}{ E_{\omega_{2}} } \\
			\multicolumn{1}{c|}{c_{|\mathcal{C}|}} & & & & \multicolumn{1}{c|}{} \\
			\hline\multicolumn{1}{c|}{c_{1}} & & & & \multicolumn{1}{c|}{} \\
			\multicolumn{1}{c|}{c_{2}} & & & & \multicolumn{1}{c|}{} \\
			\multicolumn{1}{c|}{\vdots} & \multicolumn{4}{c|}{\vdots } \\
			\multicolumn{1}{c|}{c_{|\mathcal{C}|}} & & & & \multicolumn{1}{c|}{} \\
			\hline	
			\multicolumn{1}{c|}{c_{1}} & & & & \multicolumn{1}{c|}{} \\
			\multicolumn{1}{c|}{c_{2}} & & & & \multicolumn{1}{c|}{} \\
			\multicolumn{1}{c|}{\vdots} & \multicolumn{4}{c|}{ E_{\omega_{|\Omega|}} } \\
			\multicolumn{1}{c|}{c_{|\mathcal{C}|}} & & & & \multicolumn{1}{c|}{} \\
			\hline	
		\end{array}
		\]
	\end{minipage}
	\caption{Matrix $ \mathsf{E} $}
	\label{figure:matrix-E}
\end{figure}
\end{center}

Let $ c_{j_{1}} $ and $ c_{j_{2}} $ be two configurations with corresponding columns 
$ v_{j_{1}} $ and $ v_{j_{2}} $ in $ \mathsf{E} $.
For an orthonormal set to be formed, we must have
\begin{equation}
      v_{j_{1}}^{\dagger} \cdot  v_{j_{2}} = \left \lbrace \begin{array}{ll}
              1 & j_{1}=j_{2} \\
              0 & j_{1} \neq j_{2}
      \end{array}
      \right.
\end{equation}
for all such pairs.
This constraint imposes some easily checkable restrictions on $ \delta $.
The (quite long) list of these local conditions for QTM
wellformedness can be found in \cite{Ya11A}.

PrQSPACE$ (s) $ is the class of languages that are recognized by QTMs running in space $ O(s) $
with unbounded error. (Note that this complexity class has been
defined and used by Watrous in references \cite{Wa99,Wa03}. As we
will demonstrate shortly, our QTM model is at least as powerful as the
models used in those papers, and it may well be strictly more powerful
than them. Since the aim is to understand the full power of
space-bounded quantum computation, we suggest it would make sense to
adopt our definition of PrQSPACE as the standard.)

It is a well-established fact \cite{Wa09} that 
any quantum computational model defined using superoperators can efficiently simulate its classical 
counterpart, and so PrSPACE($s$) $ \subseteq $ PrQSPACE($s$) for all $s$. Some early models 
used in the study of space-bounded quantum computation, which do not make full use of the 
capabilities allowed by quantum mechanics, can fail to achieve some tasks that are possible for 
the corresponding classical machines \cite{KW97,MC00}.

The two-way quantum finite automaton (2QFA) is obtained by removing the
work tape of the QTM:
\begin{equation}
      \mathcal{M}=(Q,\Sigma,\Omega,\delta,q_{1},\Delta).
\end{equation}
The transition function of a 2QFA is therefore specified so that
\begin{equation}
      \delta(q,\sigma,q^{\prime},d_{i},\omega) \in \mathbb{C}
\end{equation}
is the amplitude with which the machine enters
state $ q^{\prime} $, writes $ \omega $ on the register,
and updates the position of the input tape with respect to $ d_{i} \in
\lrhd $, if it reads $ \sigma $ on the input
tape when originally in state $ q $. 
See Appendix \ref{appendix:QTM-wellformedness} for a list of easily checkable 
local conditions for wellformedness of 2QFAs.

In the remainder of this section, we
will examine some specializations of the QTM model that have appeared
in the literature.

\subsubsection*{QTMs with classical heads}

Although our definition of space usage as the number of work tape
squares used during the computation is standard in the study of small
as well as large space bounds \cite{Si06,Sz94,FK94},
some researchers prefer to utilize QTM models where the tape head
locations are classical (i.e. the heads do not enter quantum
superpositions) to avoid the possibility of using quantum resources
that increase with input size for the implementation of the heads.
For details of this specialization of our model, which we call the QTM with classical heads (CQTM), see Appendix B, which also includes a demonstration
of the fact that all quantum machines can simulate their probabilistic
counterparts easily.

Watrous' QTM model in \cite{Wa03}, which we call Wa03-QTM for ease of
reference, is a CQTM variant
that has an additional classical work tape and classical internal
states. Every Wa03-QTM can be simulated exactly (i.e. preserving the
same acceptance probability for every input) by CQTMs with only some
time overhead.\footnote{We
omit the proof here, but it is not hard to show how to simulate the
classical components of a Wa03-QTM within a CQTM.} Note that
Wa03-QTMs allow only algebraic transition amplitudes by definition.

Let us consider real-time versions of 2QFAs, 
whose tape heads are forced by definition to have classical locations \cite{Pa00}. 
 If the quantum machine model used is sufficiently general, 
then the intermediate measurements can be postponed easily to the end of
the algorithm in real-time computation. That final measurement can be performed on the set of
internal states, rather than the finite register.
Therefore, as with RT-PFAs, we specify a subset of the internal
states of the machine as the collection of accepting states, denoted $ Q_{a} $.

A real-time quantum finite automaton (RT-QFA) \cite{Hi08} is a 5-tuple
\begin{equation}
      \mathcal{M}=(Q,\Sigma,\{\mathcal{E}_{\sigma} \mid \sigma \in \tilde{\Sigma} \},q_{1},Q_{a}),
\end{equation}
where each $ \mathcal{E}_{\sigma } $ is an operator having elements
$ \{ E_{\sigma,1},\ldots,E_{\sigma,k} \} $ for some $ k \in \mathbb{Z}^{+} $
satisfying
\begin{equation}
      \sum_{i=1}^{k} E_{\sigma,i}^{\dagger} E_{\sigma,i} = I.
\end{equation}
Additionally, we define the projector
\begin{equation}
      P_{a} = \sum_{q \in Q_{a}} \ket{q}\bra{q}
\end{equation}
in order to check for acceptance.
For any given input string $ w \in \Sigma^{*} $, $ \tilde{w} $ is
placed on the tape, and the computation can be traced by density matrices
\begin{equation}
 \label{equation:denmat}
      \rho_{j} = \mathcal{E}_{\tilde{w}_{j}} (\rho_{j-1}) =
      \sum_{i=1}^{k} E_{\tilde{w}_{j},i} \rho_{j-1} E_{\tilde{w}_{j},i}^{\dagger},
\end{equation}
where $ 1 \le j \le | \tilde{w} |  $, and $ \rho_{0} = \ket{q_{1}}
\bra{q_{1}} $ is the initial density matrix. This is how density
matrices evolve according to superoperators \cite{NC00}.
The transition operators can be extended easily for any string as
\begin{equation}
      \mathcal{E}_{w \sigma} = \mathcal{E}_{\sigma} \circ \mathcal{E}_{w},
\end{equation}
where $ w \in \Sigma^{*} $ and $ \mathcal{E}_{\varepsilon} = I $.
Note that, if $ \mathcal{E}=\{E_{i} \mid 1 \le i \le k \} $ and 
$ \mathcal{E}^{\prime}=\{ E^{\prime}_{j} \mid 1 \le j \le k^{\prime} \} $, then
\begin{equation}
	\mathcal{E^{\prime}} \circ \mathcal{E} = \{ E_{j}^{\prime}E_{i} \mid 1 \le i \le k, 1 \le j \le k^{\prime} \}.
\end{equation}
The probability that RT-QFA $ \mathcal{M} $ will accept $ w $ is
\begin{equation}
      f_{\mathcal{M}}(w) = tr( P_{a} \mathcal{E}_{\tilde{w}}(\rho_{0})) = tr(P_{a} \rho_{| \tilde{w} |} ).
\end{equation}

The class of languages recognized by RT-QFAs with cutpoint is denoted by QAL. 
The class of languages recognized by these machines with nonstrict cutpoint is denoted by coQAL.
QAL $ \cup $ coQAL is denoted by uQAL.

\begin{lemma}
      \label{lemma:RT-QFA-to-GFA}
      For any RT-QFA $ \mathcal{M} $ with $ n $ internal states,
      there exists a GFA  $ \mathcal{G} $ with $ n^{2} $ internal
states such that
      $ f_{\mathcal{M}}(w) = f_{\mathcal{G}}(w) $ for all $ w \in \Sigma^{*} $.
\end{lemma}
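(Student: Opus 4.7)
The natural strategy is to vectorize the density-matrix dynamics of $\mathcal{M}$ using a real basis of the space of Hermitian matrices. The real vector space of Hermitian $n \times n$ matrices has dimension exactly $n^2$, so I would fix an orthonormal (with respect to the Hilbert--Schmidt inner product) real basis $\{H_1,\dots,H_{n^2}\}$ of Hermitian matrices --- for example the $n$ diagonal matrix units $\ket{i}\bra{i}$, together with $\tfrac{1}{\sqrt{2}}(\ket{i}\bra{j}+\ket{j}\bra{i})$ and $\tfrac{i}{\sqrt{2}}(\ket{i}\bra{j}-\ket{j}\bra{i})$ for $i<j$. Every Hermitian $\rho$ then has a unique real coordinate vector $\mathrm{vec}(\rho)\in\mathbb{R}^{n^2}$ with entries $\mathrm{tr}(H_\alpha \rho)$.

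Next I would observe that each superoperator $\mathcal{E}_\sigma(\rho)=\sum_i E_{\sigma,i}\,\rho\,E_{\sigma,i}^{\dagger}$ sends Hermitian matrices to Hermitian matrices, so it restricts to a real linear map on this $n^2$-dimensional space; in the chosen basis it is represented by a real $n^2\times n^2$ matrix $A_\sigma$ whose $(\beta,\alpha)$-entry is $\mathrm{tr}(H_\beta\,\mathcal{E}_\sigma(H_\alpha))$. Similarly, $\rho\mapsto \mathrm{tr}(P_a\rho)$ is a real linear functional on Hermitian matrices, hence corresponds to a real row vector $f'\in\mathbb{R}^{1\times n^2}$ with entries $\mathrm{tr}(P_a H_\alpha)$. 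An easy induction on Equation~(\ref{equation:denmat}) then gives $\mathrm{vec}(\rho_j) = A_{\tilde w_j}\mathrm{vec}(\rho_{j-1})$, so after processing the full tape the GFA's ``trace'' action recovers the RT-QFA's acceptance probability exactly.

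To match the GFA interface, which uses the alphabet $\Sigma$ (no end-markers) and a fixed initial vector and final vector, I would absorb the end-markers into $v_0$ and $f$: set
\begin{equation}
v_0 \;:=\; A_{\cent}\,\mathrm{vec}(\ket{q_1}\bra{q_1}), \qquad f \;:=\; f' A_{\dollar},
\end{equation}
and take $\mathcal{G}=(\{q_1,\dots,q_{n^2}\},\Sigma,\{A_\sigma\mid\sigma\in\Sigma\},v_0,f)$. By the induction above and the definition of $f_{\mathcal{M}}$, for every $w\in\Sigma^*$,
\begin{equation}
f\,A_{w_{|w|}}\cdots A_{w_1}\,v_0 \;=\; f'\,\mathrm{vec}\!\bigl(\mathcal{E}_{\tilde w}(\rho_0)\bigr) \;=\; \mathrm{tr}\bigl(P_a\rho_{|\tilde w|}\bigr) \;=\; f_{\mathcal{M}}(w),
\end{equation}
which is precisely $f_{\mathcal{G}}(w)$.

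The only real subtlety --- and what I would flag as the main point requiring care --- is the reality of $A_\sigma$, $v_0$, and $f$, since the GFA definition in this paper insists on real-valued matrices and vectors. The standard Kronecker-product vec-trick $\mathrm{vec}(E\rho E^\dagger)=(\overline{E}\otimes E)\,\mathrm{vec}(\rho)$ would give complex $n^2\times n^2$ transition matrices, which is not allowed. Working in a basis of Hermitian operators circumvents this: because $\mathcal{E}_\sigma$ preserves Hermiticity and $P_a$ is Hermitian, the matrix entries $\mathrm{tr}(H_\beta\,\mathcal{E}_\sigma(H_\alpha))$ and $\mathrm{tr}(P_a H_\alpha)$ are automatically real, yielding a legitimate GFA with exactly $n^2$ internal states.
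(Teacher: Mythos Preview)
Your proof is correct and takes a genuinely different, more direct route than the paper's. The paper first applies the standard complex vectorization $vec(E\rho E^{\dagger})=(E\otimes E^{*})\,vec(\rho)$ to obtain an $n^{2}$-state GFA with \emph{complex} transition matrices, then replaces each complex entry by a $2\times 2$ real block to get a legitimate real GFA with $2n^{2}$ states, and only in a final compression step (suggested by a referee) invokes the observation that Hermitian matrices require only $n^{2}$ real parameters, sandwiching each transition by change-of-representation maps $L,L'$ to bring the state count down to $n^{2}$. You bypass this detour entirely by working from the outset in a real orthonormal basis of the Hermitian cone; since each $\mathcal{E}_{\sigma}$ preserves Hermiticity and traces of products of Hermitian operators are real, the resulting $n^{2}\times n^{2}$ transition matrices, initial vector, and final functional are automatically real. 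Your argument is cleaner and yields the optimal state count in one shot; the paper's version has the minor advantage of building on the familiar Kronecker-product identity and making the intermediate $2n^{2}$-state construction explicit, which connects to prior literature (\cite{MC00,LQ08,YS09A}) that stopped at $2n^{2}$ states.
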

\begin{proof}
      Let $ \mathcal{M} = (Q_{1},\Sigma,\{\mathcal{E}_{\sigma} \mid \sigma \in \tilde{\Sigma}\},q_{1},Q_{a} ) $ 
      be the RT-QFA with $ n $ internal states, and let each 
      $ \mathcal{E}_{\sigma} $ have $ k $ elements, without loss of generality.
      We will construct GFA $ \mathcal{G}=(Q_{2},\Sigma,\{A_{\sigma} \mid \sigma \in \Sigma \},v_{0},f) $ 
      with $ n^{2} $ internal states. We start by building an intermediate GFA $ \mathcal{G^{\prime}}=(Q_{3},\Sigma,\{A^{\prime}_{\sigma} \mid \sigma \in \Sigma \},v^{\prime}_{0},f^{\prime}) $ with the required simulation property but
      with $ 2n^{2} $ states.
      We will use the mapping $ vec $ described in Figure \ref{figure:vec}
      in order to linearize the computation of $ \mathcal{M} $,
      so that it can be traced by  $ \mathcal{G}^{\prime} $.

\begin{figure}[h]      
      \centering
      \fbox{
      \begin{minipage}{0.90\textwidth}
              \footnotesize{
                      Let $ A $, $ B $, and $ C $      be $ n \times
n $ dimensional matrices.
                      $ vec $ is a linear mapping from $ n \times n $
                      matrices to $ n^{2} $ dimensional (column)
vectors defined as
                      \begin{equation}
                              vec(A)[(i-1)n+j] = A[i,j],
                      \end{equation}
                      where $ 1 \le i,j \le N $. One can
verify the
following properties:
                      \begin{equation}
                              \label{equation:vec-ABC}
                              vec(ABC) = (A \otimes C^{T})vec(B)
                      \end{equation}
                      and
                      \begin{equation}
                              \label{equation:vec-AtB}
                              tr(A^{T}B)=vec(A)^{T}vec(B).
                      \end{equation}
              }
      \end{minipage}
      }
      \caption{The definition and properties of $ vec $ (see Page 73 in \cite{Wa03})}
      \label{figure:vec}
\end{figure}

      We define
      \begin{equation}
              v_{0}^{\prime\prime} = vec(\rho_{1}),
      \end{equation}
      where
      \begin{equation}
              \rho_{1} = \mathcal{E}_{\cent}(\rho_{0}) = \sum_{i=1}^{k} E_{\cent,i} \rho_{0} E_{\cent,i}^{\dagger}.
      \end{equation}
      For each $ \sigma \in \Sigma $, we define
      \begin{equation}
              A_{\sigma}^{\prime\prime} = \sum_{i=1}^{k} E_{\sigma,i} \otimes E_{\sigma,i}^{*},
      \end{equation}
      and so we obtain (by Equations \ref{equation:denmat} and \ref{equation:vec-ABC})
      \begin{equation}
              vec(\mathcal{E_{\sigma}}(\rho)) = A_{\sigma}^{\prime\prime} vec(\rho)
      \end{equation}
      for any density matrix $ \rho $.
      Finally, we define
      \begin{equation}
              f^{\prime\prime} = vec(P_{a})^{T} \sum_{i=1}^{k} E_{\dollar,i} \otimes E_{\dollar,i}^{*}.
      \end{equation}
      It can be verified by using Equation \ref{equation:vec-AtB} that for any input string $ w \in \Sigma^{*} $,
      \begin{equation}
              f^{\prime\prime} A^{\prime\prime}_{w_{|w|}} \cdots A^{\prime\prime}_{w_{1}} v_{0}^{\prime\prime} =
              tr(P_{a} \mathcal{E}_{\dollar} \circ \mathcal{E}_{w} \circ \mathcal{E}_{\cent} (\rho_{0})) =
              f_{\mathcal{M}}(w).
      \end{equation}
      The complex entries of $ v_{0}^{\prime\prime} $, 
      $ \{ A^{\prime\prime}_{\sigma} \mid \sigma \in \Sigma \} $, and $ f^{\prime\prime} $
      can be replaced \cite{MC00} with $ 2 \times 2 $ dimensional real matrices,\footnote{$ a+bi $ is replaced with
      $ \left( \begin{array}{rl}      a & b \\ -b & a \end{array} \right) $.} and so we obtain the equations
      \begin{equation}
              \left(
              \begin{array}{cc}
                      f_{\mathcal{M}}(w) & 0 \\
                      0 & f_{\mathcal{M}}(w)
              \end{array}
              \right) = f^{\prime\prime\prime}  A^{\prime\prime\prime}_{w_{|w|}} \cdots A^{\prime\prime\prime}_{w_{1}} 
              v_{0}^{\prime\prime\prime},
      \end{equation}
      where the terms with triple primes are obtained from the corresponding terms with double primes.
      We finish the construction of $ \mathcal{G^{\prime}} $ by stating that
      \begin{enumerate}
              \item $ v_{0}^{\prime} $ is the first column of $ v_{0}^{\prime\prime\prime} $,
              \item $ A_{\sigma}^{\prime} $ is equal to $ A_{\sigma}^{\prime\prime\prime} $, for each $ \sigma \in \Sigma $, and
              \item $ f^{\prime} $ is the first row of $ f^{\prime\prime\prime} $.
      \end{enumerate}
    We  refer the reader to \cite{MC00,LQ08,YS09A}, that present similar
	constructions for other
	types of real-time QFAs. The remainder of this proof is an improvement over these constructions regarding the number of states, and was kindly suggested to us by one of the anonymous referees of this paper, to whom we are indebted.
	
	Since density matrices are Hermitian, all entries on the main diagonal are real, and the entries on the opposite sides of the diagonal are complex conjugates of each other, meaning that one actually needs only $n^2$ distinct real numbers to represent the $ n \times n $ density matrices of $\mathcal{M}$. So the information in the vector $ v_{0}^{\prime} $ in the definition of $ \mathcal{G^{\prime}} $ can in fact fit in an $ n^{2} $-dimensional vector.
	To perform conversions between these two representations, we can define two linear operators,
	denoted $ L $ and $ L' $, such that
	\begin{itemize}
		\item $ L $, an $ n^{2} \times 2n^{2} $-dimensional matrix containing entries from the set $\{-1, 0, 1\}$,
			transforms $ 2n^{2} $-dimensional vectors in the format of machine $ \mathcal{G^{\prime}} $ to equivalent $ n^{2} $-dimensional vectors, and
		\item $ L' $, a $ 2n^{2} \times n^{2} $-dimensional matrix, 
			performs the reverse transformation.
	\end{itemize}
Hence, the state-efficient GFA	$ \mathcal{G} $ is constructed by setting
      \begin{enumerate}
              \item $ v_{0} = L v_{0}^{\prime} $,
              \item $ A_{\sigma} = L A_{\sigma}^{\prime} L' $, for each $ \sigma \in \Sigma $, and
              \item $ f = f^{\prime} L' $.
      \end{enumerate}
\qed\end{proof}

\begin{corollary}
      QAL = S.
\end{corollary}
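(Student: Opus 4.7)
My plan is to establish the corollary by proving the two inclusions $\textrm{QAL} \subseteq \textrm{S}$ and $\textrm{S} \subseteq \textrm{QAL}$ separately, both of which should follow quickly from results already available in the excerpt.

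For $\textrm{QAL} \subseteq \textrm{S}$, I would invoke Lemma \ref{lemma:RT-QFA-to-GFA} directly. Given any language $L \in \textrm{QAL}$, take a witnessing RT-QFA $\mathcal{M}$ and a cutpoint $\lambda$ (strict or nonstrict); the lemma produces a GFA $\mathcal{G}$ with $f_{\mathcal{G}}(w)=f_{\mathcal{M}}(w)$ for every $w \in \Sigma^{*}$, so $\mathcal{G}$ recognizes $L$ with the same cutpoint $\lambda$ (strict or nonstrict). Since the text notes that RT-PFAs, GFAs and 2PFAs all characterize $\textrm{S}$ with cutpoint, this immediately places $L$ in $\textrm{S}$. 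I would just need to be careful to handle both the strict and nonstrict cutpoint regimes uniformly, which causes no issue because the GFA inherits exactly the same acceptance value function.

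For the reverse inclusion $\textrm{S} \subseteq \textrm{QAL}$, I would exhibit a direct simulation of RT-PFAs by RT-QFAs. Given an RT-PFA $\mathcal{P}=(Q,\Sigma,\{A_{\sigma}\},q_{1},Q_{a})$, I define a RT-QFA on the same state set by taking, for each $\sigma \in \tilde{\Sigma}$, the operation elements
\begin{equation}
E_{\sigma,(i,j)} = \sqrt{A_{\sigma}[i,j]}\,\ket{q_{i}}\bra{q_{j}},\qquad 1\le i,j\le |Q|.
\end{equation}
A quick check shows $\sum_{i,j} E_{\sigma,(i,j)}^{\dagger}E_{\sigma,(i,j)} = \sum_{j}\bigl(\sum_{i}A_{\sigma}[i,j]\bigr)\ket{q_{j}}\bra{q_{j}} = I$ because each column of $A_{\sigma}$ is stochastic, so $\mathcal{E}_{\sigma}$ is a legitimate superoperator. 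Starting from $\rho_{0}=\ket{q_{1}}\bra{q_{1}}$, the density matrix after reading $\tilde{w}$ stays diagonal, and its diagonal entries are precisely the probabilities assigned by $\mathcal{P}$ to the corresponding states; measuring with the projector $P_{a}=\sum_{q \in Q_{a}}\ket{q}\bra{q}$ then yields exactly $f_{\mathcal{P}}(w)$, so the RT-QFA recognizes the same language with the same cutpoint.

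Neither direction has a real obstacle: the forward direction is just an appeal to the already-proved lemma together with Turakainen's characterization of $\textrm{S}$ via GFAs, and the reverse direction is the general observation (noted in the excerpt as a well-established fact from \cite{Wa09}) that superoperator-based quantum models simulate their classical counterparts. The only delicate point is the bookkeeping around strict versus nonstrict cutpoints and the implicit closure of $\textrm{S}$ under complement that is needed so that $\textrm{S}$ captures both regimes; I would address this by pointing out that both acceptance and rejection probabilities are preserved exactly by the constructions above, so the containments hold on the nose in both the strict and the nonstrict settings.
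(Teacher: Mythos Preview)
Your proposal is correct and follows essentially the same route as the paper: the forward inclusion $\textrm{QAL}\subseteq\textrm{S}$ is exactly Lemma~\ref{lemma:RT-QFA-to-GFA} together with Turakainen's GFA characterization of $\textrm{S}$, and the reverse inclusion is the exact simulation of a RT-PFA by a RT-QFA (your Kraus operators $E_{\sigma,(i,j)}=\sqrt{A_\sigma[i,j]}\,\ket{q_i}\bra{q_j}$ are precisely the finite-automaton version of the construction given in Appendix~\ref{appendix:CQTM}). One minor point: your closing remark about needing ``closure of $\textrm{S}$ under complement'' is a red herring---both $\textrm{QAL}$ and $\textrm{S}$ are defined via \emph{strict} cutpoint only, so the corollary concerns only that regime and no complement-closure issue arises (and indeed $\textrm{S}$ is not known to be closed under complement).
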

We therefore have that real-time unbounded-error probabilistic and quantum finite automata
are equivalent in power.
We will show in Sections \ref{section:MainResults} and \ref{section:KWQFA-languages} 
that this equivalence does not carry over to the two-way case.

\subsubsection*{QTMs with restricted measurements}

In another specialization of the QTM model, the \textit{QTM with
restricted measurements}, the machine is unidirectional, the
heads can enter quantum superpositions, $ \Delta = \{n,a,r\} $, and
$ | \Omega_{n} | = | \Omega_{a} | = | \Omega_{r} | = 1  $. The first
family of QTMs that was formulated for the analysis of space
complexity issues \cite{Wa98,Wa99}, which we call the Wa98-QTM,
corresponds to such a model, with the added restriction that the
transition
amplitudes are only allowed to be rational numbers.
The finite automaton versions of QTMs with restricted
measurements\footnote{These models,
which also allow unrestricted transition amplitudes by the convention
in automata theory, are introduced in the paper written by Kondacs and
Watrous \cite{KW97}.}
are known as Kondacs-Watrous quantum finite automata, and abbreviated
as 2KWQFAs, 1KWQFAs, or RT-KWQFAs, depending on the set of allowed
directions of movement for the input head.
These are pure state models,
since the non-halting part of the computation is always represented by
a single quantum state.
Therefore, configuration or state vectors, rather than the density
matrix formalism, can be used in order to
trace the computation easily.
To be consistent with the literature on 2KWQFAs, we specialize the
2QFA model by the following process:
\begin{enumerate}
	\item The finite register does not need to be refreshed, since the 
       	computation continues if and only if the initial symbol is observed.
    \item In fact, 2KWQFAs do not need to have the finite register at all, instead, similarly to 2PFAs, the set of internal states of the 2KWQFA 
    is partitioned to sets of nonhalting, accepting, and rejecting states, denoted
    $ Q_{n} $, $ Q_{a} $, and $ Q_{r} $, respectively, which can be obtained easily by taking the tensor
    product of the internal states of the 2QFA and the set $ \{n, a, r\} $.
	\item A configuration is designated as nonhalting (resp. accepting or rejecting),
		if its internal state is a member of $ Q_{n} $ (resp. $ Q_{a} $ or $ Q_{r} $).
               Nonhalting (resp. accepting or rejecting) configurations form the set $ \mathcal{C}_{n} $
               (resp. $ \mathcal{C}_{a} $ or $ \mathcal{C}_{r} $) (for a given input string).
       \item The evolution of the configuration sets can be represented by a
unitary matrix.
       \item The measurement is done on the configuration set with
projectors $ P_{n} $, $ P_{a} $, and $ P_{r}$,
              defined as
              \begin{equation}
                      P_{\tau}  =\sum_{c \in \mathcal{C}^{w}_{\tau}} \ket{c}\bra{c}
              \end{equation}
              for a given input string $ w \in \Sigma^{*} $, where $ \tau \in \{n,a,r\} $ and 
              the standard actions are associated with the outcomes ``$n$", ``$a$", and ``$r$".
\end{enumerate}

Formally, a 2KWQFA is a 6-tuple
\begin{equation}
       \mathcal{M} = \{Q,\Sigma,\delta,q_{1},Q_{a},Q_{r}\},
\end{equation}
where $ Q_{n} = Q \setminus \{ Q_{a} \cup Q_{r} \} $ and $ q_{1} \in Q_{n} $.
$ \delta $ induces a unitary matrix $ U_{\sigma}$, whose rows and columns are
indexed by internal states for each input symbol $ \sigma $.
Since all 2KWQFAs are unidirectional, 
we will use the notations $ \leftstate{q} $, $ \stopstate{q} $, and $
\rightstate{q} $
for internal state $ q $ in order to represent the value of $ D_{i}(q) $ as
$ \leftarrow $, $ \downarrow $, and $ \rightarrow $, respectively.

A RT-KWQFA is a 6-tuple
\begin{equation}
       \mathcal{M} = \{Q,\Sigma,\{U_{\sigma} \mid \sigma \in \tilde{\Sigma} \},q_{1},Q_{a},Q_{r}\},
\end{equation}
where $ \{U_{\sigma} \mid \sigma \in \tilde{\Sigma} \} $ are unitary transition matrices.
In contrast to the other kinds of real-time finite automata, a RT-KWQFA is
measured at each step during computation
after the unitary transformation is applied.
The projectors are defined as
\begin{equation}
       P_{\tau}     =\sum_{q \in Q_{\tau}} \ket{q}\bra{q},
\end{equation}
where $ \tau \in \Delta $.
The nonhalting portion of the computation of a RT-KWQFA can be traced by
a state vector, say $ \ket{u} $, such that
$ \braket{i}{u} $ corresponds to state $ q_{i} $.
The computation begins with $ \ket{u_{0}} = \ket{q_{1}} $. For a given
input string $ w \in \Sigma^{*} $,
at step $ j $ $ (1 \le j \le |\tilde{w}|) $:
\begin{equation}
       \ket{u_{j}} = P_{n}U_{\tilde{w}_{j}} \ket{u_{j-1}},
\end{equation}
the input is accepted with probability
\begin{equation}
       || P_{a}U_{\tilde{w}_{j}} \ket{u_{j-1}} ||^{2},
\end{equation}
and rejected with probability
\begin{equation}
       || P_{r}U_{\tilde{w}_{j}} \ket{u_{j-1}} ||^{2}.
\end{equation}
The overall acceptance and rejection probabilities are accumulated by
summing up these values at each step. Note that, the state vector
representing the nonhalting portion is not normalized in the
description given above.

Brodsky and Pippenger \cite{BP02}, who studied various properties of
some early models of quantum finite automata, defined the class of
languages recognized by RT-KWQFAs with unbounded error, denoted $ UMM
$,
in a way that is slightly different than our approach in this paper:
$ L \in UMM $ if and only if there exists a RT-KWQFA $ \mathcal{M} $
such that
\begin{itemize}
       \item $ f_{\mathcal{M}}(w) > \lambda $ when $ w \in L $ and
       \item $ f_{\mathcal{M}}(w) < \lambda $ when $ w \notin L $,
\end{itemize}
for some $ \lambda \in [0,1] $.

For descriptions of several other QTM variants, we refer the reader to \cite{MW08} and \cite{Ga09}.

\section{Probabilistic vs. quantum computation with sublogarithmic space} \label{section:MainResults}

Watrous compared the unbounded-error probabilistic space complexity
classes (PrSPACE($s$)) with the corresponding classes for both Wa98-QTMs \cite{Wa98,Wa99} and 
Wa03-QTMs \cite{Wa03} for space bounds $ s=\Omega(\log n) $, 
establishing the identity of the associated
quantum space complexity classes with each other, and also with the
corresponding probabilistic ones. The case of $s=o(\log n)$ was left
as an open question \cite{Wa99}. 
In this section, we provide an answer to that question.

We already know that QTMs allowing superoperators are at least as powerful as PTMs for any common space bound. 
We will now exhibit a 1KWQFA which performs a task that is impossible
for PTMs with small space bounds.

Consider the nonstochastic context-free language \cite{NH71}	
\begin{equation*}
	\footnotesize
	\mbox{ $
	L_{NH} = \{a^{x}ba^{y_{1}}ba^{y_{2}}b \cdots a^{y_{t}}b \mid x,t,y_{1}, \cdots, y_{t}
		\in \mathbb{Z}^{+} \mbox{ and } \exists k ~ (1 \le k \le t), x=\sum_{i=1}^{k}y_{i} \}
		$ }
\end{equation*}
over the alphabet $ \{a,b \} $.
Freivalds and Karpinski \cite{FK94} have
proven the following facts about $ L_{NH} $:
\begin{fact}
	No PTM using space $ o(\log\log n) $ can recognize $ L_{NH} $ with unbounded error.
\end{fact}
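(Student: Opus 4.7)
The plan is to assume for contradiction that some PTM $M$ with space $s(n) = o(\log \log n)$ recognizes $L_{NH}$ with unbounded error at a cutpoint $\lambda$, and then derive an impossibility by exploiting the tension between the algebraic structure that $M$'s acceptance probability is forced to obey and the intricate YES/NO pattern dictated by $L_{NH}$. The first step is a configuration count: with $s(n)$ work-tape cells and $s(n)$ work-head positions, the number of internal configurations (state, work-tape contents, work-head position) is at most $K(n) = |Q|\cdot|\Gamma|^{s(n)}\cdot s(n) = 2^{O(s(n))} = (\log n)^{o(1)}$, which is subpolynomial in $\log n$.

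The second step is an algebraic description of $M$'s behaviour on structured inputs of the form $w(x, \vec{y}) := a^{x} b a^{y_{1}} b \cdots a^{y_{t}} b$ with $\vec{y}$ held fixed. By the standard crossing-sequence/absorbing-Markov-chain analysis for two-way probabilistic machines, the acceptance probability admits a form
\begin{equation*}
f_{M}(w(x, \vec{y})) = u^{T} R^{x} v,
\end{equation*}
where $R$ is a $K(n)\times K(n)$ transfer matrix capturing the effect of a single $a$ in the initial block, and $u, v$ encode the contribution of the rest of the input together with the initial and accepting configurations. By the Cayley--Hamilton theorem, the integer sequence $\big(f_{M}(w(x, \vec{y}))\big)_{x \ge 1}$ therefore obeys a linear recurrence of order at most $K(n)$, and can equivalently be expressed as an exponential polynomial of total complexity $O(K(n))$ in $x$.

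The third step is to exploit the combinatorial structure of $L_{NH}$ to violate this algebraic restriction. Fix $\vec{y} = (2, 2, \ldots, 2)$ with $t$ components, so that $w(x, \vec{y}) \in L_{NH}$ iff $x \in \{2, 4, \ldots, 2t\}$. The function $g(x) := f_{M}(w(x, \vec{y})) - \lambda$ must then be strictly positive at the $t$ even values $2, 4, \ldots, 2t$ and nonpositive at the odd integers interleaving them, forcing at least $2t - 1$ sign changes on the discrete interval $[1, 2t]$. A sign-change estimate for exponential polynomials of complexity $O(K(n))$ limits this count to $O(K(n))$. Choosing $t := \lceil \sqrt{\log n}\, \rceil$ makes the input length $n = \Theta(t)$ satisfy $2t - 1 > O(K(n)) = (\log n)^{o(1)}$ for large enough $n$, yielding the desired contradiction.

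The main obstacle is the sign-change estimate invoked in the third step: substochastic transfer matrices may possess complex eigenvalues on or near the unit circle, producing oscillatory exponential polynomials whose zeros are not immediately governed by a naive Descartes-style count. Overcoming this requires either a Perron--Frobenius reduction that isolates the dominant eigenmodes of $R$ from lower-order oscillatory corrections, or a more robust Vandermonde-style linear-algebra argument showing directly that no exponential polynomial of complexity $K(n)$ can realize the prescribed alternation pattern at $\omega(K(n))$ prescribed integer arguments. Once this sign-change bound is in place, the remaining steps are routine computations that assemble the contradiction.
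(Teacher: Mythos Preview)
First, a framing note: the paper does not prove this statement. It is quoted as a fact from Freivalds and Karpinski \cite{FK94}, so there is no ``paper's own proof'' to compare your proposal against. What follows is therefore an assessment of your argument on its own merits.

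The central gap is your third step, and you have correctly identified it yourself without resolving it. The claim that an exponential polynomial of complexity $K$ has at most $O(K)$ sign changes on the integers is simply false. The sequence $g(x)=(-1)^{x}$, or equivalently $\cos(\pi x)$, is an exponential polynomial of complexity two arising from a $2\times 2$ real stochastic matrix (a transposition), and it alternates sign at every integer. A two-state PFA that tracks $x\bmod 2$ and accepts with probability $0.55$ when $x$ is even and $0.45$ when $x$ is odd already realises exactly the alternation pattern you are trying to rule out, with $K=O(1)$. Neither of your proposed repairs works: Perron--Frobenius isolates the spectral radius but places no useful restriction on the subdominant spectrum, which is where the oscillation lives; and the Vandermonde idea controls values, not signs, so it cannot forbid a low-complexity sequence from matching a long sign pattern. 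If there is a route through sign changes at all, it must exploit something you have not used, namely that for $x>2t$ the input is \emph{not} in $L_{NH}$, so the required sign pattern is ``alternate on $[1,2t]$, then stay $\le 0$ forever'' --- and that non-periodicity is what a low-order recurrence might actually struggle to produce. But that is a different argument from the one you sketched, and making it precise is nontrivial.

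Two smaller issues. Your bookkeeping ``choosing $t:=\lceil\sqrt{\log n}\rceil$ makes the input length $n=\Theta(t)$'' is circular; just parametrise by $t$, note that the inputs have length $\Theta(t)$, and compare $t$ against $K(\Theta(t))=(\log t)^{o(1)}$. And for a genuinely two-way PTM the expression $f_{M}(w(x,\vec{y}))=u^{T}R^{x}v$ is not immediate: the head may cross the boundary between the $a^{x}$ block and the tail arbitrarily many times, so one needs a more careful crossing-matrix construction (giving a recurrence of order $\mathrm{poly}(K)$ rather than $K$). This is fixable, but it is not the ``standard'' one-line remark your write-up suggests.
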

\begin{fact}
	No 1PTM using space $ o(\log n) $ can recognize $ L_{NH} $ with unbounded error.
\end{fact}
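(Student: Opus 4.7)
The plan is to lift the space bound into a bound on the \emph{effective linear dimension} of the machine's acceptance computation on unary prefixes, and then exploit the diagonal structure that $ L_{NH} $ imposes to derive a contradiction.

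First, I would note that a 1PTM $ \mathcal{P} $ using work-tape space $ s(n) = o(\log n) $ has at most $ N(n) = |Q| \cdot s(n) \cdot |\Gamma|^{s(n)} = n^{o(1)} $ distinct work-tape configurations on inputs of length $ n $. Since the input tape head is one-way, the state of $ \mathcal{P} $ after reading any prefix is a probability distribution over these configurations, and the eventual acceptance probability is a linear functional of this distribution. Hence, restricted to inputs of length $ n $, $ \mathcal{P} $ acts as a GFA of dimension $ N(n) $: there are fixed matrices $ \{A_{\sigma}^{(n)} \mid \sigma \in \tilde{\Sigma}\} $ and vectors $ v_{0}, f $ such that $ f_{\mathcal{P}}(w) = f^{T} A_{\tilde{w}_{|\tilde{w}|}}^{(n)} \cdots A_{\tilde{w}_{1}}^{(n)} v_{0} $ for every $ w $ with $ |w| \le n $.

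Next, I would restrict to the subfamily $ \{ a^{x} b a^{y} b : 1 \le x, y \le n \} $ of $ L_{NH} $ (the case $ t = 1 $), on which the acceptance probability takes the bilinear form $ h(x, y) = f^{T} A_{\dollar}^{(n)} A_{b}^{(n)} (A_{a}^{(n)})^{y} A_{b}^{(n)} (A_{a}^{(n)})^{x} A_{\cent}^{(n)} v_{0} $. Membership in $ L_{NH} $ on this subfamily coincides exactly with the diagonal condition $ x = y $, so recognition with cutpoint $ \lambda $ forces $ \{ (x, y) \in [1, n]^{2} : h(x, y) > \lambda \} $ (or the analogous $ \ge $-set) to equal the diagonal. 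Passing $ A_{a}^{(n)} $ to Jordan normal form writes $ h(x, y) $ as a bilinear combination of $ O(N(n)^{2}) $ terms of the form $ \lambda_{i}^{x} \mu_{j}^{y} $ multiplied by polynomials in $ x, y $ of bounded degree. A Khovanskii-style bound on sign changes of such exponential polynomials then caps the number of connected components of the superlevel set $ \{ h > \lambda \} $ by $ \mathrm{poly}(N(n)) = n^{o(1)} $, contradicting the diagonal's requirement of $ n $ isolated accepting lattice points once $ n $ is large.

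The main obstacle is making the cell-count bound simultaneously \emph{polynomial} in $ N(n) $ and uniform in the (possibly non-algebraic) entries of $ A_{a}^{(n)} $ drawn from $ \tilde{\mathbb{R}} $, since merely exponential-in-$ N(n) $ bounds fall short for space bounds like $ s(n) = \log n / \log \log n $, where $ N(n) = n^{1/\log\log n} $ and $ 2^{N(n)} $ already exceeds $ n $. One way through is to work inside the real-closed field generated by the entries of $ A_{a}^{(n)} $ and invoke Tarski--Seidenberg together with quantitative Khovanskii bounds whose dependence is polynomial in the matrix dimension rather than in the bit complexity of the entries; this is essentially the route taken by Freivalds and Karpinski in \cite{FK94}.
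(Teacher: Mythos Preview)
First, note that the paper does not prove this statement at all: it is stated as a known \textbf{Fact} and attributed to Freivalds and Karpinski \cite{FK94}. So there is no ``paper's own proof'' to compare against beyond that citation.

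Your sketch, however, contains a genuine gap that is independent of what \cite{FK94} actually does. The restriction to the subfamily $\{a^{x}ba^{y}b : x,y\ge 1\}$ (the case $t=1$) discards the hardness of $L_{NH}$: on that subfamily the language is just $\{x=y\}$, which \emph{is} recognizable by a constant-size GFA with cutpoint. For instance, with
\[
A_{a}=\mathrm{diag}(4,2,1),\qquad A_{b}=\mathrm{diag}(1,2,4),\qquad v_{0}=(1,1,1)^{T},\qquad f=(-1,2,-1),
\]
one gets $h(x,y)=f\,A_{b}A_{a}^{y}A_{b}A_{a}^{x}v_{0}$ proportional to $-(2^{x}-2^{y})^{2}$, and $h(x,y)>\lambda$ (for a suitable $\lambda<0$) holds on integer pairs iff $x=y$. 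Thus no contradiction can possibly be extracted from the $t=1$ slice, regardless of how tight your component bound is; the essential content of $L_{NH}$ is the existential quantifier $\exists k\,(x=\sum_{i\le k}y_{i})$, and any lower-bound argument must engage with varying $t$.

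The connected-components step is also wrong on its own terms. You assert that the diagonal forces $n$ isolated accepting lattice points, hence $n$ components of $\{(x,y)\in\mathbb{R}^{2}:h(x,y)>\lambda\}$. But lattice-isolation says nothing about real components: in the example above, the superlevel set $\{(x,y):|2^{x}-2^{y}|<c\}$ is a single connected strip about the line $x=y$, yet its integer points with $x,y\ge 1$ are exactly the diagonal. So even a perfect $\mathrm{poly}(N(n))$ Khovanskii bound would not contradict anything. (This also means your closing remark that \cite{FK94} ``essentially'' follows this route cannot be right; their argument must exploit more of the $\exists k$ structure.)
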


There exists a one-way \textit{deterministic} TM that recognizes $
L_{NH} $ within the optimal space bound $O(\log n)$ \cite{FK94}. No
(two-way) PTM which recognizes $ L_{NH} $ using $ o(\log n) $ space is
known as of the time of writing.

\begin{theorem}
       \label{theorem:1KWQFA}
       There exists a 1KWQFA that recognizes $ L_{NH} $ with unbounded error.
\end{theorem}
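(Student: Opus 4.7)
Proof plan:

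The plan is to build a 1KWQFA $M$ that encodes the running balance $x - \sum_{j<i} y_j$ as a rotation angle in a two-dimensional quantum subspace, and that uses the stay movement $\downarrow$ (available to 1KWQFAs but not to real-time QFAs) to orchestrate a multi-step check at each block boundary. Pick an angle $\theta$ such that $\theta/\pi$ is irrational and $\cos\theta, \sin\theta \in \tilde{\mathbb{R}}$ (for instance, $\theta = 1$ radian), and let $R(\cdot)$ denote the corresponding planar rotation on a two-dimensional subspace spanned by orthogonal internal states $\ket{c_0}, \ket{c_1}$.

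On reading $\cent$, the machine initializes the tracking subspace to $\ket{c_0}$ and sets a ``first-block'' flag. For each $a$ read while the flag is ``first-block'', $M$ applies $R(+\theta)$ to the tracking qubit; on the first $b$ it flips the flag to ``subsequent-block'' without producing any halting amplitude; for each $a$ read thereafter it applies $R(-\theta)$. Consequently, just before the $i$-th $b$ is read, the live tracking state is $R(\beta_i)\ket{c_0}$, where $\beta_i = (x - \sum_{j=1}^{i-1} y_j)\theta$.

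At each $b$ from the second onward, $M$ uses its stay option to spread the $b$-transition over two computational steps. The first step applies a unitary that partially projects a fraction of the live amplitude into an examination subspace whose accepting and rejecting projectors distinguish $\ket{c_0}$ from $\ket{c_1}$, and the second step restores the rotation state on the surviving branch so that subsequent block boundaries can also be examined. With this design, each $b$ contributes a term depending on $\cos^2 \beta_i$ to the cumulative acceptance probability $f_M(w)$. Since $\theta/\pi$ is irrational, $\cos^2 \beta_i = 1$ if and only if $\beta_i = 0$, which holds exactly when $x = \sum_{j=1}^{i-1} y_j$. Language membership can then be read off from $f_M(w)$ by choosing an appropriate cutpoint.

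The main obstacle is tuning the transitions so that a single fixed cutpoint $\lambda$ satisfies $L_{NH} = \{w \mid f_M(w) > \lambda\}$, or its nonstrict analogue. A naive split would cause the acceptance values on in-language and out-of-language inputs to overlap, because $\cos^2 \beta_i$ can be arbitrarily close to $1$ even when no $\beta_i$ equals $0$. The stay option is exactly what enables the ``examine now, keep tracking for later'' pattern that prevents such overlap, and is what distinguishes 1KWQFAs from RT-KWQFAs; the latter recognize only stochastic languages by the preceding corollary, and so cannot recognize the nonstochastic $L_{NH}$.
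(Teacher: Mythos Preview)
Your proposal has a genuine gap, and you have in fact put your finger on it yourself without resolving it. The irrational-rotation encoding gives, at the $i$-th block boundary, a halting contribution that depends on $\cos^{2}\beta_{i}$. For a non-member, every $\beta_{i}$ is nonzero, but $\cos^{2}\beta_{i}$ can still be arbitrarily close to $1$; for a member, exactly one $\beta_{i}$ equals $0$ while the others are arbitrary. Hence the aggregate acceptance probability of a non-member with many ``near-miss'' blocks can exceed that of a member whose single exact match is accompanied by blocks with small $\cos^{2}\beta_{i}$. No fixed cutpoint separates the two sets. Your final paragraph asserts that the stay move ``prevents such overlap,'' but the stay move only lets you spread the $b$-transition over two steps so that you can both sample and continue; it does nothing to make the non-matching contributions sum to a constant. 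The rotation paradigm is well suited to a single equality test (as in the Ambainis--Freivalds automaton for $\{a^{n}b^{n}\}$), but here the language is an \emph{existential} disjunction over $k$, and your scheme provides no mechanism to collapse all the ``$\beta_{i}\neq 0$'' cases to a common value.

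The paper's construction uses a completely different idea that sidesteps this difficulty. It launches two computational paths from $\cent$ that traverse the input at different speeds: $\mathsf{path_{1}}$ moves one square per step before the first $b$ and pauses one extra step per $a$ afterward, while $\mathsf{path_{2}}$ does the reverse. These two paths occupy the same tape square at the same time step \emph{only} on the $(k{+}1)$-st $b$ when $x=\sum_{i\le k}y_{i}$. At every $b$ after the first, each path performs a two-target QFT into $\mathsf{path_{accept}}$ and $\mathsf{path_{reject}}$; the signs are arranged so that if the two paths perform this QFT in the same configuration, the $\mathsf{path_{reject}}$ amplitudes cancel destructively. All other transitions are carefully paired into equal accepting/rejecting ``twin'' states. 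The upshot is that $f_{M}(w)=\tfrac{1}{2}$ exactly for $w\notin L_{NH}$ and $f_{M}(w)>\tfrac{1}{2}$ for $w\in L_{NH}$, so strict cutpoint $\tfrac{1}{2}$ works. The crucial ingredient your plan lacks is this ``interference only upon spatio-temporal coincidence'' mechanism, which forces every non-matching block to contribute \emph{exactly} zero net bias rather than a value that is merely less than the maximum.
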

\begin{proof}
       Consider the 1KWQFA $ \mathcal{M}=(Q, \Sigma, \delta, q_{0},Q_{a},Q_{r}) $, 
       where $ \Sigma=\{a,b\} $, and the state sets are as follows:
       \[
               \begin{array}{lcl}
                       Q_{n} & = &
                               \{ \rightstate{q_{0}} \} \cup \{ \rightstate{q_{i}} \mid 1 \le i \le 6 \} \cup \{
\rightstate{p_{i}} \mid 1 \le i \le 6 \}
                               \cup \{ \rightstate{a_{i}} \mid 1 \le i \le 4 \} \\
                               & & \cup~\{ \rightstate{r_{i}} \mid 1 \le i \le 4 \}
                                       \cup \{ \stopstate{w_{i}} \mid 1 \le i \le 6 \}, \\
                       Q_{a} &  =  & \{ \stopstate{A_{i}} \mid 1 \le i \le 18  \},
                               ~Q_{r} = \{ \stopstate{R_{i}} \mid 1 \le i \le 18  \}. \\
               \end{array}
        \]
       Let each $ U_{\sigma} $ induced by $ \delta $ act as indicated in Figures
\ref{figure:1KWQFA-1} and \ref{figure:1KWQFA-2}, and extend each to be unitary.
\begin{figure}[!h]       
       \setlength{\extrarowheight}{1pt}
       \centering
\footnotesize{
\begin{tabular}{|c|l|l|}
\hline
       Stages & \multicolumn{1}{c|}{$ U_{\cent}, U_{a} $} &
                \multicolumn{1}{c|}{$ U_{\dollar} $} \\
\hline
       & $ U_{\cent} \rc{q_{0}} = \frac{1}{\sqrt{2}}\rc{q_{1}} +
\frac{1}{\sqrt{2}}\rc{p_{1}} $  & \\
\hline
       $ \begin{array}{@{}c@{}} \mbox{I} \\ ( \mathsf{path_{1}} ) \end{array} $
       &
       $ \begin{array}{@{}l@{}}
               U_{a} \rc{q_{1}} = \frac{1}{\sqrt{2}}\rc{q_{2}} + \frac{1}{2}
\Sc{A_{1}} + \frac{1}{2} \Sc{R_{1}} \\
               U_{a} \rc{q_{2}} = \frac{1}{\sqrt{2}}\rc{q_{2}} - \frac{1}{2}
\Sc{A_{1}} - \frac{1}{2} \Sc{R_{1}}
       \end{array} $
       &
       $ \begin{array}{@{}l@{}}
               U_{\dollar} \rc{q_{1}}= \frac{1}{\sqrt{2}} \Sc{A_{1}} + \frac{1}{\sqrt{2}}\Sc{R_{1}} \\
               U_{\dollar} \rc{q_{2}}= \frac{1}{\sqrt{2}} \Sc{A_{2}} + \frac{1}{\sqrt{2}}\Sc{R_{2}} \\
               U_{\dollar} \rc{q_{3}}= \frac{1}{\sqrt{2}} \Sc{A_{3}} + \frac{1}{\sqrt{2}}\Sc{R_{3}}
       \end{array}$
       \\
\hline
       $ \begin{array}{@{}c@{}} \mbox{I} \\ ( \mathsf{path_{2}} ) \end{array} $
       &
       $ \begin{array}{@{}l@{}}
               U_{a} \rc{p_{1}} = \Sc{w_{1}} \\
               U_{a} \Sc{w_{1}} = \frac{1}{\sqrt{2}}\rc{p_{2}} + \frac{1}{2}
\Sc{A_{2}} + \frac{1}{2} \Sc{R_{2}} \\
               U_{a} \rc{p_{2}} = \Sc{w_{2}} \\
               U_{a} \Sc{w_{2}} = \frac{1}{\sqrt{2}}\rc{p_{2}} - \frac{1}{2}
\Sc{A_{2}} - \frac{1}{2} \Sc{R_{2}}
       \end{array} $
       &
       $ \begin{array}{@{}l@{}}
               U_{\dollar} \rc{p_{1}}= \frac{1}{\sqrt{2}} \Sc{A_{4}} + \frac{1}{\sqrt{2}}\Sc{R_{4}} \\
               U_{\dollar} \rc{p_{2}}= \frac{1}{\sqrt{2}} \Sc{A_{5}} + \frac{1}{\sqrt{2}}\Sc{R_{5}} \\
               U_{\dollar} \rc{p_{3}}= \frac{1}{\sqrt{2}} \Sc{A_{6}} + \frac{1}{\sqrt{2}}\Sc{R_{6}}
       \end{array}$
       \\
\hline
       $ \begin{array}{@{}c@{}} \mbox{II} \\ ( \mathsf{path_{1}} ) \end{array} $
       &
       $ \begin{array}{@{}l@{}}
               U_{a} \rc{q_{3}}=\Sc{w_{3}} \\
               U_{a} \Sc{w_{3}} = \frac{1}{\sqrt{2}}\rc{q_{4}} + \frac{1}{2}
\Sc{A_{3}} + \frac{1}{2} \Sc{R_{3}} \\
               U_{a} \rc{q_{4}}=\Sc{w_{4}} \\
               U_{a} \Sc{w_{4}} = \frac{1}{\sqrt{2}}\rc{q_{4}} - \frac{1}{2}
\Sc{A_{3}} - \frac{1}{2} \Sc{R_{3}}
       \end{array} $
       &
       $ \begin{array}{@{}l@{}}
               U_{\dollar} \rc{q_{4}} = \frac{1}{\sqrt{2}} \Sc{A_{7}} + \frac{1}{\sqrt{2}}\Sc{R_{7}} \\
               U_{\dollar} \rc{q_{5}}=  \frac{1}{\sqrt{2}} \Sc{A_{8}} + \frac{1}{\sqrt{2}}\Sc{R_{8}}
       \end{array} $
       \\
\hline
       $ \begin{array}{@{}c@{}} \mbox{II} \\ ( \mathsf{path_{2}} ) \end{array} $
       &
       $ \begin{array}{@{}l@{}}
               U_{a} \rc{p_{3}}=\frac{1}{\sqrt{2}} \rc{p_{4}} + \frac{1}{2}
\Sc{A_{4}} + \frac{1}{2} \Sc{R_{4}} \\
               U_{a} \rc{p_{4}}=\frac{1}{\sqrt{2}} \rc{p_{4}} - \frac{1}{2}
\Sc{A_{4}} - \frac{1}{2} \Sc{R_{4}}
       \end{array} $
       &
       $ \begin{array}{@{}l@{}}
               U_{\dollar} \rc{p_{4}} = \frac{1}{\sqrt{2}} \Sc{A_{9}} + \frac{1}{\sqrt{2}}\Sc{R_{9}} \\
               U_{\dollar} \rc{p_{5}}= \frac{1}{\sqrt{2}} \Sc{A_{10}} + \frac{1}{\sqrt{2}}\Sc{R_{10}}
       \end{array} $
       \\
\hline
       $ \begin{array}{@{}c@{}} \mbox{III} \\ ( \mathsf{path_{1}} ) \end{array} $
       &
       $ \begin{array}{@{}l@{}}
               U_{a} \rc{q_{5}}=\Sc{w_{5}} \\
               U_{a} \Sc{w_{5}} = \frac{1}{\sqrt{2}}\rc{q_{6}} + \frac{1}{2}
\Sc{A_{5}} + \frac{1}{2} \Sc{R_{5}} \\
               U_{a} \rc{q_{6}}=\Sc{w_{6}} \\
               U_{a} \Sc{w_{6}} = \frac{1}{\sqrt{2}}\rc{q_{6}} - \frac{1}{2}
\Sc{A_{5}} - \frac{1}{2} \Sc{R_{5}}
       \end{array} $
       &
       $ \begin{array}{@{}l@{}}
               U_{\dollar} \rc{q_{6}} = \frac{1}{\sqrt{2}} \Sc{A_{11}} + \frac{1}{\sqrt{2}}\Sc{R_{11}}
       \end{array} $
       \\
\hline
       $ \begin{array}{@{}c@{}} \mbox{III} \\ ( \mathsf{path_{2}} ) \end{array} $
       &
       $ \begin{array}{@{}l@{}}
               U_{a} \rc{p_{5}}=\frac{1}{\sqrt{2}} \rc{p_{6}} + \frac{1}{2}
\Sc{A_{6}} + \frac{1}{2} \Sc{R_{6}} \\
               U_{a} \rc{p_{6}}=\frac{1}{\sqrt{2}} \rc{p_{6}} - \frac{1}{2}
\Sc{A_{6}} - \frac{1}{2} \Sc{R_{6}}
       \end{array} $
       &
       $ \begin{array}{@{}l@{}}
               U_{\dollar} \rc{p_{6}} = \frac{1}{\sqrt{2}} \Sc{A_{12}} + \frac{1}{\sqrt{2}}\Sc{R_{12}}
       \end{array} $
       \\
\hline
       $ \begin{array}{@{}c@{}} \mbox{III} \\ ( \mathsf{path_{accept}} )
\end{array} $
       &
       $ \begin{array}{@{}l@{}}
               U_{a} \rc{a_{1}}=\frac{1}{\sqrt{2}} \rc{a_{2}} + \frac{1}{2}
\Sc{A_{7}} + \frac{1}{2} \Sc{R_{7}} \\
               U_{a} \rc{a_{2}}=\frac{1}{\sqrt{2}} \rc{a_{2}} - \frac{1}{2}
\Sc{A_{7}} - \frac{1}{2} \Sc{R_{7}} \\
               U_{a} \rc{a_{3}}=\frac{1}{\sqrt{2}} \rc{a_{4}} + \frac{1}{2}
\Sc{A_{8}} + \frac{1}{2} \Sc{R_{8}} \\
               U_{a} \rc{a_{4}}=\frac{1}{\sqrt{2}} \rc{a_{4}} - \frac{1}{2}
\Sc{A_{8}} - \frac{1}{2} \Sc{R_{8}}
       \end{array} $
       &
       $ \begin{array}{@{}l@{}}
               U_{\dollar} \rc{a_{1}} = \Sc{A_{17}} \\
               U_{\dollar} \rc{a_{3}} = \Sc{A_{18}} \\
               U_{\dollar} \rc{a_{2}} = \frac{1}{\sqrt{2}} \Sc{A_{13}} + \frac{1}{\sqrt{2}}\Sc{R_{13}} \\
               U_{\dollar} \rc{a_{4}} = \frac{1}{\sqrt{2}} \Sc{A_{14}} + \frac{1}{\sqrt{2}}\Sc{R_{14}}
       \end{array} $
       \\
\hline
       $ \begin{array}{@{}c@{}} \mbox{III} \\ ( \mathsf{path_{reject}} )
\end{array} $
       &
       $ \begin{array}{@{}l@{}}
               U_{a} \rc{r_{1}}=\frac{1}{\sqrt{2}} \rc{r_{2}} + \frac{1}{2}
\Sc{A_{9}} + \frac{1}{2} \Sc{R_{9}} \\
               U_{a} \rc{r_{2}}=\frac{1}{\sqrt{2}} \rc{r_{2}} - \frac{1}{2}
\Sc{A_{9}} - \frac{1}{2} \Sc{R_{9}} \\
               U_{a} \rc{r_{3}}=\frac{1}{\sqrt{2}} \rc{r_{4}} + \frac{1}{2}
\Sc{A_{10}} + \frac{1}{2} \Sc{R_{10}} \\
               U_{a} \rc{r_{4}}=\frac{1}{\sqrt{2}} \rc{r_{4}} - \frac{1}{2}
\Sc{A_{10}} - \frac{1}{2} \Sc{R_{10}}
       \end{array} $
       &
       $ \begin{array}{@{}l@{}}
               U_{\dollar} \rc{r_{1}} = \Sc{R_{17}} \\
               U_{\dollar} \rc{r_{3}} = \Sc{R_{18}} \\
               U_{\dollar} \rc{r_{2}} = \frac{1}{\sqrt{2}} \Sc{A_{15}} + \frac{1}{\sqrt{2}}\Sc{R_{15}} \\
               U_{\dollar} \rc{r_{4}} = \frac{1}{\sqrt{2}} \Sc{A_{16}} + \frac{1}{\sqrt{2}}\Sc{R_{16}}
       \end{array} $
       \\
       \hline
\end{tabular}
}
\caption{Specification of the transition function of the 1KWQFA for $ L_{NH} $ (part 1)}
\label{figure:1KWQFA-1}
\end{figure}
\begin{figure}[!h]       
       \setlength{\extrarowheight}{1pt}
       \centering
\footnotesize{
\begin{tabular}{|c|l|l|}
       \hline
       Stages & \multicolumn{2}{c|}{$ U_{b} $} \\
       \hline
       $ \begin{array}{@{}c@{}} \mbox{I} \\ ( \mathsf{path_{1}} ) \end{array} $
       &
       \multicolumn{2}{l|}{
               $ \begin{array}{@{}l@{}}
                       U_{b} \rc{q_{1}} = \frac{1}{\sqrt{2}} \Sc{A_{1}} + \frac{1}{\sqrt{2}}\Sc{R_{1}} \\
                       U_{b} \rc{q_{2}} = \rc{q_{3}} \\
                       U_{b} \rc{q_{3}} = \frac{1}{\sqrt{2}} \Sc{A_{2}} + \frac{1}{\sqrt{2}}\Sc{R_{2}}
               \end{array} $ }
       \\
       \hline
       $ \begin{array}{@{}c@{}} \mbox{I} \\ ( \mathsf{path_{2}} ) \end{array} $
       &
       \multicolumn{2}{l|}{
               $ \begin{array}{@{}l@{}}
                       U_{b} \rc{p_{1}} = \frac{1}{\sqrt{2}} \Sc{A_{3}} + \frac{1}{\sqrt{2}}\Sc{R_{3}} \\
                       U_{b} \rc{p_{2}} = \rc{p_{3}} \\
                       U_{b} \rc{p_{3}} = \frac{1}{\sqrt{2}} \Sc{A_{4}} + \frac{1}{\sqrt{2}}\Sc{R_{4}}
               \end{array} $
       }
       \\
       \hline
       $ \begin{array}{@{}c@{}} \mbox{II} \\ ( \mathsf{path_{1}} ) \end{array} $
       &
       \multicolumn{2}{l|}{
               $ \begin{array}{@{}l@{}}
                       U_{b} \rc{q_{4}} = \frac{1}{2}
\rc{q_{5}}+\frac{1}{2\sqrt{2}}\rc{a_{1}}+\frac{1}{2\sqrt{2}}\rc{r_{1}}
                               + \frac{1}{2} \Sc{A_{11}} + \frac{1}{2} \Sc{R_{11}}\\
                       U_{b} \rc{q_{5}} = \frac{1}{\sqrt{2}} \Sc{A_{5}} + \frac{1}{\sqrt{2}}\Sc{R_{5}}
               \end{array} $
       }
       \\
       \hline
       $ \begin{array}{@{}c@{}} \mbox{II} \\ ( \mathsf{path_{2}} ) \end{array} $
       &
       \multicolumn{2}{l|}{
               $ \begin{array}{@{}l@{}}
                       U_{b} \rc{p_{4}} = \frac{1}{2} \rc{p_{5}} + \frac{1}{2\sqrt{2}}\rc{a_{1}}
                       - \frac{1}{2\sqrt{2}}\rc{r_{1}} + \frac{1}{2} \Sc{A_{12}} +
\frac{1}{2} \Sc{R_{12}}\\
                       U_{b} \rc{p_{5}} = \frac{1}{\sqrt{2}} \Sc{A_{6}} + \frac{1}{\sqrt{2}}\Sc{R_{6}}
               \end{array} $
       }
       \\
       \hline
       $ \begin{array}{@{}c@{}} \mbox{III} \\ ( \mathsf{path_{1}} ) \end{array} $
       &
       \multicolumn{2}{l|}{
               $ \begin{array}{@{}l@{}}
                       U_{b} \rc{q_{6}} = \frac{1}{2}
\rc{q_{5}}+\frac{1}{2\sqrt{2}}\rc{a_{1}}+\frac{1}{2\sqrt{2}}\rc{r_{1}}
                               - \frac{1}{2} \Sc{A_{11}} - \frac{1}{2} \Sc{R_{11}}
               \end{array} $
       }
       \\
       \hline
       $ \begin{array}{@{}c@{}} \mbox{III} \\ ( \mathsf{path_{2}} ) \end{array} $
       &
       \multicolumn{2}{l|}{
               $ \begin{array}{@{}l@{}}
                       U_{b} \rc{p_{6}} = \frac{1}{2} \rc{p_{5}} + \frac{1}{2\sqrt{2}}\rc{a_{1}}
                               - \frac{1}{2\sqrt{2}}\rc{r_{1}} - \frac{1}{2} \Sc{A_{12}} -
\frac{1}{2} \Sc{R_{12}}
               \end{array} $
       }
       \\
       \hline
       $ \begin{array}{@{}c@{}} \mbox{III} \\ ( \mathsf{path_{accept}} )
\end{array} $
       &
       \multicolumn{2}{l|}{
               $ \begin{array}{@{}l@{}}
                       U_{b} \rc{a_{2}}=\frac{1}{\sqrt{2}} \rc{a_{3}} + \frac{1}{2}
\Sc{A_{13}} + \frac{1}{2} \Sc{R_{13}} \\
                       U_{b} \rc{a_{1}} = \frac{1}{\sqrt{2}} \Sc{A_{7}} + \frac{1}{\sqrt{2}}\Sc{R_{7}} \\
                       U_{b} \rc{a_{4}}=\frac{1}{\sqrt{2}} \rc{a_{3}} - \frac{1}{2}
\Sc{A_{13}} - \frac{1}{2} \Sc{R_{13}} \\
                       U_{b} \rc{a_{3}} = \frac{1}{\sqrt{2}} \Sc{A_{8}} + \frac{1}{\sqrt{2}}\Sc{R_{8}}
               \end{array} $
       }
       \\
       \hline
       $ \begin{array}{@{}c@{}} \mbox{III} \\ ( \mathsf{path_{reject}} ) \end{array} $
       &
       \multicolumn{2}{l|}{
               $ \begin{array}{@{}l@{}}
                       U_{b} \rc{r_{2}}=\frac{1}{\sqrt{2}} \rc{r_{3}} + \frac{1}{2}
\Sc{A_{14}} + \frac{1}{2} \Sc{R_{14}} \\
                       U_{b} \rc{r_{1}} = \frac{1}{\sqrt{2}} \Sc{A_{9}} + \frac{1}{\sqrt{2}}\Sc{R_{9}} \\
                       U_{b} \rc{r_{4}}=\frac{1}{\sqrt{2}} \rc{r_{3}} - \frac{1}{2}
\Sc{A_{14}} - \frac{1}{2} \Sc{R_{14}} \\
                       U_{b} \rc{r_{3}} = \frac{1}{\sqrt{2}} \Sc{A_{10}} + \frac{1}{\sqrt{2}}\Sc{R_{10}}
               \end{array} $
       }
       \\
       \hline
\end{tabular}
}
\caption{Specification of the transition function of the 1KWQFA for $ L_{NH} $ (part 2)}
\label{figure:1KWQFA-2}
\end{figure}

       Machine $ \mathcal{M} $ starts computation on symbol $ \cent $ by
branching into two paths,
       $ \mathsf{path_{1}} $ and $ \mathsf{path_{2}} $, with equal probability.
       Each path and their subpaths, to be described later, check whether
the input is of the form
       $ (aa^{*}b)(aa^{*}b)(aa^{*}b)^{*} $.
       The different stages of the program indicated in Figures
\ref{figure:1KWQFA-1} and \ref{figure:1KWQFA-2}
       correspond to the
       subtasks of this regular expression check. Stage I ends successfully
if the input begins with $ (aa^{*}b) $.
       Stage II checks the second $ (aa^{*}b)$. Finally, Stage III controls
whether the input ends with $ (aa^{*}b)^{*} $.

       The reader will note that many transitions in the machine are of the form
       \[ U_{\sigma} \ket{q_{i}} = \ket{\psi} + \alpha \ket{A_{k}} + \alpha
\ket{R_{k}}, \]
       where $ \ket{\psi} $ is a superposition of configurations such that $
\braket{\psi}{\psi}=1-2\alpha^{2} $,
       $ A_{k} \in Q_{a} $, $ R_{k} \in Q_{r} $.
       The equal-probability transitions to the ``twin halting states" $
A_{k} $ and $ R_{k} $ are included to ensure that
       the matrices are unitary, without upsetting the ``accept/reject
balance" until a final decision about
       the membership of the input in $ L_{NH} $ is reached. If the regular
expression check mentioned
       above fails, each path in question splits equiprobably to one
rejecting and one accepting configuration, and the overall probability
of
       acceptance of the machine
       turns out to be precisely $ \frac{1}{2} $.      If the input is indeed of
the form $ (aa^{*}b)(aa^{*}b)(aa^{*}b)^{*} $, whether
       the acceptance probability will exceed $ \frac{1}{2} $ or not
       depends on the following additional tasks performed by the
computation paths in order to test for the equality
       mentioned in the definition of $ L_{NH} $:
       \begin{enumerate}
               \item $ \mathsf{path_{1}} $ walks over the $ a $'s at the speed of
one tape square per step until
               reading the first $ b $. After that point, $ \mathsf{path_{1}} $
pauses for one step over each $ a $
               before moving on to the next symbol.
               \item $ \mathsf{path_{2}} $ pauses for one step over each $ a $
until reading the first $ b $.
               After that point, $ \mathsf{path_{2}} $ walks over each $ a $ at the
speed of one square per step.
               \item On each $ b $ except the first one, $ \mathsf{path_{1}} $ and
$ \mathsf{path_{2}} $ split to take the
               following two courses of action with equal probability:
               \begin{enumerate}
                       \item In the first alternative, $ \mathsf{path_{1}} $ and $
\mathsf{path_{2}} $
                       perform a two-way quantum Fourier transform (QFT) \cite{KW97}:
                       \begin{enumerate}
                               \item The targets of the QFT are two new computational paths,
i.e., $ \mathsf{path_{accept}} $
                               and $ \mathsf{path_{reject}} $. Disregarding the equal-probability
transitions to
                               the twin halting states mentioned above, the QFT is realized as:
                               \[ \mathsf{path_{1}} \rightarrow \frac{1}{\sqrt{2}}
\mathsf{path_{accept}} + \frac{1}{\sqrt{2}}
                                       \mathsf{path_{reject}}
                                \]
                               \[ \mathsf{path_{2}} \rightarrow \frac{1}{\sqrt{2}}
\mathsf{path_{accept}} - \frac{1}{\sqrt{2}}
                                       \mathsf{path_{reject}}
                                \]
                               \item $ \mathsf{path_{accept}} $ and $ \mathsf{path_{reject}} $
continue computation at the
                               speed of  $ \mathsf{path_{2}} $, walking over the $b$'s without
performing the QFT any more.
                       \end{enumerate}
                       \item In the second alternative, $ \mathsf{path_{1}} $ and $
\mathsf{path_{2}} $
                       continue computation without performing the QFT.
               \end{enumerate}
               \item On symbol $ \dollar $,
                       $ \mathsf{path_{accept}} $ enters an accepting state,
                       $ \mathsf{path_{reject}} $ enters a rejecting state,
                       $ \mathsf{path_{1}} $ and $ \mathsf{path_{2}} $ enter accepting and
rejecting states with
                       equal probability.
       \end{enumerate}

       Suppose that the input is of the form
       \[ w=a^{x}ba^{y_{1}}ba^{y_{2}}b \cdots a^{y_{t}}b, \]
       where $ x,t,y_{1}, \cdots, y_{t}  \in \mathbb{Z}^{+} $.

       $ \mathsf{path_{1}} $ reaches the first $ b $ earlier than $
\mathsf{path_{2}} $.
       Once it has passed the first $ b $, $ \mathsf{path_{2}} $ becomes
faster, and may or may not catch up with
       $ \mathsf{path_{1}} $, depending on the number of $ a $'s in the
input after the first $ b $.
       The two paths can meet on the symbol following the $ x^{th} $ $a$ after
the first $ b $, since at that point
       $ \mathsf{path_{1}} $ will have paused for the same number of steps
as $ \mathsf{path_{2}} $.
       Only if that symbol is a $ b $, the two paths will perform a QFT in
the same place and at the same time.
       To paraphrase, if there exists a $ k $ $ (1 \le k \le t) $ such that
$ x=\sum_{i=1}^{k}y_{i}\ $,
       $ \mathsf{path_{1}} $ and $ \mathsf{path_{2}} $ meet over the $
(k+1)^{th} $ $ b $, and perform the QFT
       at the same step. If there is no such $ k $, the paths either never
meet, or meet over an $ a $ without a QFT.

       The $ \mathsf{path_{accept}} $ and $ \mathsf{path_{reject}} $s that
are offshoots of $ \mathsf{path_{1}} $
       continue their traversal of the string faster than $
\mathsf{path_{1}} $. On the other hand,
       the offshoots of $ \mathsf{path_{2}} $ continue their traversal at
the same speed as $ \mathsf{path_{2}} $.

       By definition, the twin halting states reached during the computation
contribute equal amounts to the
       acceptance and rejection probabilities. $ \mathsf{path_{1}} $ and $
\mathsf{path_{2}} $ accept and reject
       equiprobably when they reach the end of the string. If  $
\mathsf{path_{1}} $ and $ \mathsf{path_{2}} $ never
       perform the QFT at the same time and in the same position, every QFT
produces two equal-probability paths which
       perform identical tasks, except that one accepts and the other one
rejects at the end.

       The overall acceptance and rejection probabilities are equal, $
\frac{1}{2} $, unless a $ \mathsf{path_{reject}} $
       with positive amplitude and a $ \mathsf{path_{reject}} $ with
negative amplitude can meet and therefore cancel
       each other. In such a case, the surviving $ \mathsf{path_{accept}}
$'s will contribute the additional acceptance
       probability that will tip the balance. As described above, such a
cancellation is only possible when
       $ \mathsf{path_{1}} $ and $ \mathsf{path_{2}} $ perform the QFT together.

       Therefore, if $ w \in L_{NH} $,
       the overall acceptance probability is greater than $ \frac{1}{2} $.
If $ w \notin L_{NH} $,
       the overall acceptance probability equals $ \frac{1}{2} $.
\qed\end{proof}

\begin{corollary}
	For any space bound $s$ satisfying $ s(n)=o(\log \log n) $, 
	\[ PrSPACE(s) \subsetneq PrQSPACE(s). \]
\end{corollary}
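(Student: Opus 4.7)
The plan is to combine three ingredients already developed in the paper: the 1KWQFA $\mathcal{M}$ constructed in Theorem~\ref{theorem:1KWQFA}, which recognizes $L_{NH}$ with unbounded error; Fact~1 of Freivalds and Karpinski, asserting that no PTM using space $o(\log\log n)$ can recognize $L_{NH}$ with unbounded error; and the generic inclusion $PrSPACE(s) \subseteq PrQSPACE(s)$ noted at the end of Section~3.2 as a consequence of the ability of superoperator-based QTMs to simulate their classical counterparts.

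First I would argue that $L_{NH} \in PrQSPACE(s)$ for every function $s:\mathbb{N}\to\mathbb{Z}^{+}$. Since $\mathcal{M}$ is a finite automaton it has no work tape and so trivially operates in $O(1)$ work-tape space. The discussion of QTMs with restricted measurements in Section~3 exhibits Kondacs--Watrous machines as a specialization of the general QTM model, and every transition amplitude appearing in $\mathcal{M}$ is algebraic and hence lies in $\tilde{\mathbb{C}}$. Therefore $\mathcal{M}$ can be regarded as a QTM running in space $O(s)$ for any $s(n)\geq 1$, and hence $L_{NH} \in PrQSPACE(s)$.

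Finally, Fact~1 gives $L_{NH} \notin PrSPACE(s)$ whenever $s(n)=o(\log\log n)$; combining this with $PrSPACE(s)\subseteq PrQSPACE(s)$ immediately yields the strict separation $PrSPACE(s)\subsetneq PrQSPACE(s)$. The only step a careful reader might want spelled out is the embedding of $\mathcal{M}$ into the formally defined class $PrQSPACE(s)$, which is delivered by the specialization remarks of Section~3; no new construction or estimate is required, so the argument is essentially a two-line juxtaposition of Theorem~\ref{theorem:1KWQFA} with Fact~1.
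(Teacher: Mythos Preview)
Your proposal is correct and matches the paper's (implicit) argument exactly: the corollary is obtained by combining Theorem~\ref{theorem:1KWQFA}, Fact~1, and the inclusion $PrSPACE(s)\subseteq PrQSPACE(s)$ stated in Section~3.2, with the 1KWQFA viewed as a constant-space QTM via the specialization remarks for restricted-measurement machines. No additional idea is needed.
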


\begin{corollary}
	For any space bound $s$ satisfying $ s(n)=o(\log n) $,
	the class of languages recognized with unbounded error by 1PTMs is
	a proper subclass of the class of languages recognized with unbounded error by 1QTMs.
\end{corollary}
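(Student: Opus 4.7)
My plan is to reduce the corollary to two ingredients: (i) the simulation of one-way PTMs by one-way QTMs in the same space, and (ii) Theorem~\ref{theorem:1KWQFA} combined with Fact~2 of Freivalds and Karpinski.

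First I would dispatch the easy inclusion direction. Any 1PTM operating in space $s$ can be turned into a 1QTM operating in space $O(s)$ by embedding each stochastic transition matrix as a superoperator whose Kraus operators are indexed by the classical branches (each Kraus operator being a partial isometry that picks out the corresponding classical outcome). This is a direct instance of the general remark, cited in Section~3.2, that quantum models defined via superoperators efficiently simulate their classical counterparts, and the fact that the input head of the PTM moves only to the right or stays put is preserved verbatim by the construction. So every language recognized with unbounded error by a 1PTM in space $s$ is also recognized by a 1QTM in space $O(s)$.

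Next I would establish strict inclusion by exhibiting $L_{NH}$ as a separating witness. By Theorem~\ref{theorem:1KWQFA}, there is a 1KWQFA recognizing $L_{NH}$ with unbounded error. A 1KWQFA has no work tape at all, so it is in particular a 1QTM operating in space $O(1)$, which lies in $o(\log n)$ for any common space bound in the range of the corollary. On the other hand, Fact~2 states that no 1PTM using $o(\log n)$ space can recognize $L_{NH}$ with unbounded error. Therefore $L_{NH}$ belongs to the unbounded-error 1QTM class but not to the unbounded-error 1PTM class at every common bound $s(n) = o(\log n)$, giving the desired proper inclusion.

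The one point that requires mild care is verifying that the 1KWQFA constructed in Theorem~\ref{theorem:1KWQFA} is formally a 1QTM in the sense used to define PrQSPACE, rather than a merely analogous device; this is immediate from the explicit specialization of QTMs to KWQFAs described in Section~3.2 (the measurement and register structure of the KWQFA is a restriction, not a generalization, of that of the QTM). I expect no other obstacle, since the substantive quantum argument is already encapsulated in Theorem~\ref{theorem:1KWQFA}, and the corollary is essentially a packaging statement.
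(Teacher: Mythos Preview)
Your proposal is correct and mirrors the paper's own reasoning: the corollary is stated without proof in the paper because it follows immediately from Theorem~\ref{theorem:1KWQFA} together with Fact~2 and the already-noted fact that QTMs simulate PTMs in the same space. The only difference is that you spell out the inclusion direction and the status of the 1KWQFA as a 1QTM explicitly, which the paper leaves implicit.
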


In the next section, we will prove a fact which will allow us to state
a similar inclusion relationship between the classes of languages
recognized by QTMs with restricted measurements and PTMs using
constant space.

\begin{theorem}
 The  language
	\[ L_{YS}=\{ a^{n-1}ba^{kn} \mid n>1,k > 0 \} \] is nonstochastic,
	and can be recognized by a 2KWQFA with unbounded error.
\end{theorem}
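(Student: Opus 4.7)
The statement couples two independent claims, so my plan is to prove them separately.

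\medskip

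\noindent\emph{Nonstochasticity.} I would argue by contradiction via an eigenvalue-spectrum argument. Assume $L_{YS}\in S$; by the equivalences recalled in Section~\ref{section:space-bounded-TM} there is a GFA $\mathcal{G}=(Q,\Sigma,\{A_\sigma\},v_0,f)$ and a cutpoint $\lambda$ with $L_{YS}=\{w:f_{\mathcal{G}}(w)>\lambda\}$. Fixing $n>1$ and varying $\ell$, the function $g_n(\ell):=f_{\mathcal{G}}(a^{n-1}ba^\ell)=fA_a^\ell(A_bA_a^{n-1}v_0)$ is a linear combination of terms of the form $\ell^{d}\mu^\ell$, with $\mu$ ranging over the fixed finite spectrum of $A_a$ and $d$ bounded by the largest Jordan-block size. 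By assumption the sign of $g_n(\ell)-\lambda$ must be exactly $n$-periodic in $\ell$. Since a combination of such terms drawn from a fixed finite spectrum can realize only a bounded family of exact sign-pattern periods, this fails for all sufficiently large $n$, yielding the required contradiction.

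\medskip

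\noindent\emph{2KWQFA construction.} My plan is to adapt the interference architecture of Theorem~\ref{theorem:1KWQFA}, replacing its equality test by a divisibility test enabled by the two-way motion of the input head. On $\cent$ the machine splits equiprobably into two branches $\mathsf{path}_1$ and $\mathsf{path}_2$; both first verify rightwards that the input matches $a^{+}ba^{+}$, using the twin halting-state device of Theorem~\ref{theorem:1KWQFA} so that the ill-formed case contributes equally to acceptance and rejection. On well-formed inputs, $\mathsf{path}_1$ traverses the input rightwards at speed one, while the transition function steers $\mathsf{path}_2$ into a coherent oscillation that uses leftward moves through the $\cent a^{n-1}b$ prefix; the timing of this oscillation is chosen so that $\mathsf{path}_2$ arrives at a configuration also visited by $\mathsf{path}_1$ at the same step only at moments whose positions in $a^\ell$ are positive multiples of $n$. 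At every such coincidence a QFT-style split into oppositely signed $\mathsf{path}_{accept}$ and $\mathsf{path}_{reject}$ offshoots is inserted, exactly in the style of Figures~\ref{figure:1KWQFA-1} and~\ref{figure:1KWQFA-2}.

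\medskip

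\noindent The acceptance analysis will then mirror the one for $L_{NH}$: the twin halting states force $f_{\mathcal{M}}(w)=1/2$ unless a coherent cancellation between oppositely signed $\mathsf{path}_{reject}$ offshoots occurs, and such a cancellation happens iff $\mathsf{path}_1$ and $\mathsf{path}_2$ meet at the QFT step on $\dollar$, which by construction happens iff $\ell$ is a positive multiple of $n$. Hence $f_{\mathcal{M}}(w)>1/2$ iff $w\in L_{YS}$. The main obstacle will be designing $\mathsf{path}_2$'s oscillation so that (i) each full excursion through $\cent a^{n-1}b$ realigns $\mathsf{path}_2$ with $\mathsf{path}_1$ precisely every $n$ squares of $a^\ell$, (ii) no spurious coincidences occur at non-multiples of $n$, and (iii) the induced transition matrices $U_\sigma$ remain unitary; as in Theorem~\ref{theorem:1KWQFA}, this will be achieved by carefully inserting twin halting states that absorb stray amplitudes without disturbing the $1/2$ accept/reject balance.
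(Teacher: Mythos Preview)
Your high-level plan is close in spirit to the paper's, but both halves have gaps, and the actual proof in the paper takes a noticeably different and cleaner route in each case.

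\textbf{Nonstochasticity.} The paper does not attempt a spectral argument at all: it simply observes that if $L_{YS}$ were stochastic then so would be $\{a\}\cdot L_{YS}=\{a^{n}ba^{kn}\mid n>1,k>0\}$, and this language is already known to be nonstochastic (the reference is \cite{SS78}). Your eigenvalue approach is plausible, but the crucial claim---that linear recurrences with a fixed finite spectrum can realise only a bounded family of exact sign-pattern periods---is asserted without proof. That claim is true, but establishing it rigorously requires a genuine argument about the dominant eigenvalues (handling separately the cases where the dominant roots are roots of unity, where they have irrational argument, and where several moduli coincide). As written, your paragraph is a heuristic, not a proof; the paper's two-line reduction avoids all of this.

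\textbf{The 2KWQFA.} Your construction, as described, cannot work: if $\mathsf{path}_1$ moves right at speed one and $\mathsf{path}_2$ makes leftward excursions through the prefix, then once the paths separate $\mathsf{path}_2$ is strictly behind and can never catch up, since the maximum head speed is one square per step. So no interference can occur at positions inside $a^{\ell}$, and the two paths will never reach $\dollar$ at the same time except in the trivial case of zero excursions---your analysis paragraph silently switches from ``meet at multiples of $n$ inside $a^{\ell}$'' to ``meet on $\dollar$'', and neither version is realised by the machine you sketch. The paper's fix is to make the interference location a \emph{fixed} square, namely the unique $b$, and to send the two paths in opposite directions from there: $\mathsf{path_{left}}$ enters an infinite loop bouncing between $\cent$ and $b$ (each round trip costs $2n$ steps, with geometrically decaying amplitude), while $\mathsf{path_{right}}$ makes a single trip from $b$ to $\dollar$ and back ($2m$ steps). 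They superpose at $b$ precisely when $2nk=2m$ for some $k\ge 1$, i.e.\ when $n\mid m$, and only then does the accept/reject balance exceed $\tfrac12$. The key idea you are missing is that both paths should \emph{return} to a common anchor rather than chase one another through the suffix.
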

\begin{proof}
Suppose that $ L_{YS} $ is stochastic. Then, it is not hard to show that 
$ \{ a \} \mspace{-4mu} \cdot \mspace{-4mu} L_{YS} $ is stochastic, too.
However, as stated on page 88 of \cite{SS78}, $ \{ a \} \mspace{-4mu} \cdot \mspace{-4mu} L_{YS} $ 
is nonstochastic. 

We construct a 2KWQFA  $ \mathcal{M} = (Q,\Sigma,\delta,q_{0},Q_{a},Q_{r}) $,
where $ \Sigma=\{a,b\} $, and the state sets are
\begin{equation*}
	\begin{array}{l}
		Q_{n}  =  \{q_{0},q_{1},w_{1},w_{2},p_{1},p_{2},r_{1},r_{2},r_{3}\}, \\
		Q_{a}  =  \{A_{i} \mid 1 \le i \le 5 \},
		~Q_{r}=\{R_{i} \mid 1 \le i \le 5 \}.
	\end{array}
\end{equation*}
Let each $ U_{\sigma} $ induced by $ \delta $ 
act as indicated in Figure \ref{figure:L-YS}, and extend each to be unitary.

\begin{figure}[h]       
       \setlength{\extrarowheight}{1pt}
       \centering
\footnotesize{
\begin{tabular}{|c|l|l|}
\hline
       Stages & \multicolumn{1}{c|}{$ U_{\cent}, U_{a} $} &
                \multicolumn{1}{c|}{$ U_{b}, U_{\dollar} $} \\
\hline
	I & $ \begin{array}{@{}l@{}}
		U_{\cent} \ket{\rightstate{q_{0}}} = \ket{\rightstate{q_{0}}} \\
		U_{a} \ket{\rightstate{q_{0}}} = \frac{1}{\sqrt{2}} \ket{\rightstate{q_{1}}} + 
			\frac{1}{2} \Sc{A_{1}} + \frac{1}{2} \Sc{R_{1}} \\
		U_{a} \ket{\rightstate{q_{1}}} = \frac{1}{\sqrt{2}} \ket{\rightstate{q_{1}}} -
			\frac{1}{2} \Sc{A_{1}} - \frac{1}{2} \Sc{R_{1}} \\
	\end{array} $
	& $ \begin{array}{@{}l@{}}
		U_{b} \rc{q_{0}} = \frac{1}{\sqrt{2}} \Sc{A_{1}} + \frac{1}{\sqrt{2}} \Sc{R_{1}} \\
		U_{b} \rc{q_{1}} = \frac{1}{\sqrt{2}} \Sc{w_{1}} + \frac{1}{\sqrt{2}} \rc{r_{1}} \\
		U_{\dollar} \rc{q_{0}} = \frac{1}{\sqrt{2}} \Sc{A_{1}} + \frac{1}{\sqrt{2}} \Sc{R_{1}} \\
		U_{\dollar} \rc{q_{1}} = \frac{1}{\sqrt{2}} \Sc{A_{2}} + \frac{1}{\sqrt{2}} \Sc{R_{2}}
	\end{array} $
	\\ \hline
	& \multicolumn{2}{c|}{$ U_{\cent} $, $ U_{a} $, $ U_{b} $}
	\\ \hline
	$ \begin{array}{@{}c@{}} \mbox{II} \\ ( \mathsf{path_{left}} ) \end{array} $
	& \multicolumn{2}{l|}{ $ \begin{array}{@{}l@{}c@{}l@{}}
		U_{\cent} \lc{p_{1}} & = & \rc{p_{2}} \\
		U_{a} \lc{p_{1}} & = & \lc{p_{1}} \\
		U_{a} \rc{p_{2}} & = & \rc{p_{2}} \\
		U_{b} \Sc{w_{1}} & = & \Sc{w_{2}} \\
		U_{b} \Sc{w_{2}} & = & \frac{1}{\sqrt{2}} \lc{p_{1}} 
			- \frac{1}{2\sqrt{2}} \Sc{A_{2}} - \frac{1}{2\sqrt{2}} \Sc{R_{2}}
			- \frac{1}{2\sqrt{2}} \Sc{A_{3}} - \frac{1}{2\sqrt{2}} \Sc{R_{3}} \\
		U_{b} \rc{p_{2}} & = & \frac{1}{\sqrt{2}} \lc{p_{1}}
			+ \frac{1}{2\sqrt{2}} \Sc{A_{2}} + \frac{1}{2\sqrt{2}} \Sc{R_{2}}
			+ \frac{1}{2\sqrt{2}} \Sc{A_{3}} + \frac{1}{2\sqrt{2}} \Sc{R_{3}} \\
	\end{array} $ }
	\\ \hline
	& \multicolumn{2}{c|}{$ U_{a} $, $ U_{b} $, $ U_{\dollar} $}
	\\ \hline
	$ \begin{array}{@{}c@{}} \mbox{II} \\ ( \mathsf{path_{right}} ) \end{array} $
	& \multicolumn{2}{l|}{ $ \begin{array}{@{}l@{}c@{}l@{}}
		U_{a} \rc{r_{1}} & = & \frac{1}{\sqrt{2}} \rc{r_{2}} + \frac{1}{2} \Sc{A_{2}} + \frac{1}{2} \Sc{R_{2}} \\
		U_{a} \rc{r_{2}} & = & \frac{1}{\sqrt{2}} \rc{r_{2}} - \frac{1}{2} \Sc{A_{2}} - \frac{1}{2} \Sc{R_{2}} \\
		U_{a} \lc{r_{3}} & = & \lc{r_{3}} \\
		U_{b} \rc{r_{1}} & = & \frac{1}{\sqrt{2}} \Sc{A_{4}} + \frac{1}{\sqrt{2}} \Sc{R_{4}} \\
		U_{b} \rc{r_{2}} & = & \frac{1}{\sqrt{2}} \Sc{A_{5}} + \frac{1}{\sqrt{2}} \Sc{R_{5}} \\
		U_{b} \lc{r_{3}} & = & \frac{1}{\sqrt{2}} \Sc{A_{2}} - \frac{1}{\sqrt{2}} \Sc{R_{2}} \\
		U_{\dollar} \rc{r_{1}} & = & \frac{1}{\sqrt{2}} \Sc{A_{3}} + \frac{1}{\sqrt{2}} \Sc{R_{3}} \\
		U_{\dollar} \rc{r_{2}} & = & \lc{r_{3}} \\
	\end{array} $ }
	\\ \hline
\end{tabular}
}
\caption{Specification of the transition function of the 2KWQFA for $ L_{YS} $}
\label{figure:L-YS}
\end{figure}

If the input string does not begin with an $ a $, or if it contains  no $b$'s, the machine halts,
and the input is accepted with probability just $ \frac{1}{2} $.
Otherwise, the head moves to the right until it scans the first $ b $, on which the computation  splits to 
two equiprobable paths, say, $ \mathsf{path_{left}} $ and $ \mathsf{path_{right}} $.
Let the number of $ a $'s before the first $ b $ be $ n-1 > 0 $.

$ \mathsf{path_{left}} $ starts with two dummy stationary moves, and then enters an infinite loop.
In each iteration of this loop, the head goes to the left end-marker and then comes back to the $ b $ at the speed of 
one step per symbol. At the end of the $k^{th}$ iteration, exactly $ 2nk+n+3 $ steps after the start of computation, the head scans the $ b $ again, and $ \mathsf{path_{left}} $ splits to the superposition of configurations
\begin{equation*}
	\alpha_{k} \ket{p_{1},n} + \frac{\alpha_{k}}{2} \ket{A_{2},n+1}
	+ \frac{\alpha_{k}}{2} \ket{R_{2},n+1} + \frac{\alpha_{k}}{2} \ket{A_{3},n+1}
	+ \frac{\alpha_{k}}{2} \ket{R_{3},n+1},
\end{equation*}
where $ \alpha_{k} = \left( \frac{1}{\sqrt{2}} \right)^{n+k+1} $, and $\ket{s,h}$ denotes the configuration with state $s$ and head position $h$.

$ \mathsf{path_{right}} $ checks whether  the postfix of the input after the first $ b $ is of the form 
$ a^{+} $. If not, the machine halts, and the input is accepted with probability $ \frac{1}{2} $.
Otherwise, the head walks to the right end-marker and then comes back to the $ b $ at the speed of 
one step per symbol. 
Let the number of $ a $'s after the $ b $ be $ m > 0 $.
At the $ (2m+n+3)^{th} $ step, the head scans the $ b $, and $ \mathsf{path_{right}} $ 
splits to the superposition of configurations
\begin{equation*}
	\left( \frac{1}{\sqrt{2}} \right)^{m+n+1} \ket{A_{2},n+1} - 
	\left( \frac{1}{\sqrt{2}} \right)^{m+n+1} \ket{R_{2},n+1}.
\end{equation*}

The two paths can meet and interfere with each other only if 
\begin{equation*}
	2nk+n+3 = n+2m+3
\end{equation*}
or 
\begin{equation*}
	nk = m.
\end{equation*}
This is the case precisely for the members of $L_{YS} $, where the acceptance probability 
exceeds the rejection probability, similarly to what we had in the proof of Theorem \ref{theorem:1KWQFA}.
\qed\end{proof}

We do not know of a one-way QFA for $ L_{YS}$. Note that the somewhat simpler language  
$L_{fre}=\{ a^{n}ba^{n} \mid n \in \mathbb{Z}^{+} \}$ can be recognized with bounded error  by a 2PFA \cite{Fr81}.

The class C$_{=} $SPACE($s$) is defined \cite{Wa99}  as follows: A
language $L$ is in C$_{=} $SPACE($s$) if there exists a PTM that runs in
space $O(s)$, halts absolutely,\footnote{That is, for every input $w$,
there exists an integer $ k(w) $, such that the PTM halts with
probability 1 within $ k(w) $ steps.} and
accepts each input $w$ with probability precisely equal to $ \frac{1}{2} $ if and only if $ x \in L $.
We define the analogous family of quantum classes.

\begin{definition}
	\label{definition:c=qspace}
	A language $L$ is in C$_{=} $QSPACE($s$) if there exists a QTM that runs in space O($s$),
	halts absolutely, and accepts each input $w$ with probability precisely equal to $ \frac{1}{2} $ 
	if and only if $x \in L$.
\end{definition}

\begin{corollary}
	\label{corollary:coCSPACE-coCQSPACE}
	$ \mbox{coC}_{=}\mbox{SPACE}(1) \subsetneq \mbox{coC}_{=}\mbox{QSPACE}(1). $
\end{corollary}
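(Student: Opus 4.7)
I plan to establish the proper inclusion by placing $ L_{YS} $ inside $ \mbox{coC}_{=}\mbox{QSPACE}(1) $ while keeping it outside $ \mbox{coC}_{=}\mbox{SPACE}(1) $. The easy containment $ \mbox{coC}_{=}\mbox{SPACE}(1) \subseteq \mbox{coC}_{=}\mbox{QSPACE}(1) $ will be handled by the observation, already invoked in this section, that any PTM can be simulated by a superoperator-based QTM of the same space complexity with identical acceptance probabilities and inherited halting bound; the exact-$ \frac{1}{2} $ condition therefore carries over verbatim.

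For the upper bound, I would appeal to the preceding theorem: the 2KWQFA it constructs for $ L_{YS} $ accepts every $ w \in L_{YS} $ with probability strictly above $ \frac{1}{2} $ and every $ w \notin L_{YS} $ with probability exactly $ \frac{1}{2} $. Read from the complementary side, this same constant-space QTM witnesses $ \overline{L_{YS}} \in \mbox{C}_{=}\mbox{QSPACE}(1) $, and therefore $ L_{YS} \in \mbox{coC}_{=}\mbox{QSPACE}(1) $.

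For the lower bound, I would argue by contradiction. Assume there is an absolutely-halting 2PFA $ M $ with $ p_{M}(w) = \frac{1}{2} $ iff $ w \notin L_{YS} $. From $ M $ I would build a 2PFA $ M' $ that simulates $ M $ on $ w $, detects when $ M $ first enters a halting state, records that verdict in its finite control, drives the input head left until it reads $ \cent $, and then re-executes $ M $ from the initial configuration with fresh coin flips; $ M' $ then accepts iff the two runs deliver the same verdict. Absolute halting of $ M $ guarantees that this whole simulation terminates in bounded time while using only $ O(1) $ extra space. The acceptance probability then factors as
\[
p_{M'}(w) \;=\; p_{M}(w)^{2} + (1 - p_{M}(w))^{2} \;=\; 1 - 2\, p_{M}(w)(1 - p_{M}(w)) \;\geq\; \frac{1}{2},
\]
with equality precisely when $ p_{M}(w) = \frac{1}{2} $. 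Hence $ L_{YS} = \{ w : p_{M'}(w) > \frac{1}{2} \} $ would be stochastic, directly contradicting the nonstochasticity established in the theorem.

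The main obstacle I expect is the construction of $ M' $: I have to verify that the rewind-and-relaunch phase fits inside a 2PFA's finite control and, crucially, that the two simulated runs draw their random bits independently so that the product formula for $ p_{M'} $ is exact. Both points lean on $ M $'s absolute halting, without which the second run might never be triggered and the neat factorization could fail.
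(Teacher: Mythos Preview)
Your upper bound breaks on the absolute-halting requirement in Definition~\ref{definition:c=qspace}. The 2KWQFA built for $L_{YS}$ does \emph{not} halt absolutely: the proof explicitly states that $\mathsf{path_{left}}$ ``enters an infinite loop,'' shedding amplitude into halting states at each iteration but never terminating. So while it is true that this machine accepts with probability exactly $\frac{1}{2}$ on non-members, it fails the defining condition of $\mbox{C}_{=}\mbox{QSPACE}(1)$, and you cannot conclude $L_{YS}\in\mbox{coC}_{=}\mbox{QSPACE}(1)$ from it. The paper avoids this by using $L_{NH}$ and the 1KWQFA of Theorem~\ref{theorem:1KWQFA}: being one-way, that machine halts within $O(|w|)$ steps on every computation path, so absolute halting is automatic.

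Your lower-bound argument is fine and is essentially a self-contained proof of the inclusion $\mbox{coC}_{=}\mbox{SPACE}(1)\subseteq\mbox{S}$ that the paper simply cites from \cite{Pa71}. The ``run $M$ twice and accept on agreement'' construction is correct for an absolutely-halting 2PFA, and together with the equivalence of 2PFAs and RT-PFAs in the cutpoint setting \cite{Ka91} it yields $L\in\mbox{S}$ for any $L\in\mbox{coC}_{=}\mbox{SPACE}(1)$. If you swap $L_{YS}$ for $L_{NH}$ throughout, your plan becomes the paper's proof with this inclusion spelled out rather than cited.
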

\begin{proof}
	Since $ \mbox{coC}_{=}\mbox{SPACE}(1) $ is a proper subset of $ S $ \cite{Pa71}, 
	$ L_{NH} $ is not a member of $ \mbox{coC}_{=}\mbox{SPACE}(1) $.
	On the other hand, as shown in Theorem \ref{theorem:1KWQFA}, $ L_{NH} $
	is also a member of $ \mbox{coC}_{=}\mbox{QSPACE}(1) $.
\qed\end{proof}

\section{Languages recognized by RT-KWQFAs with unbounded error} \label{section:KWQFA-languages}

In this section, we settle an open problem of Brodsky and Pippenger
\cite{BP02}, giving a complete characterization of the class of
languages recognized with unbounded error by RT-KWQFAs. It turns out
that these restricted RT-QFAs, which are known to be inferior to RT-PFAs
in the bounded error case, are equivalent to them in the unbounded
error setting.

\begin{lemma}
       \label{lemma:rt-kwqfa}
       Any language recognized with cutpoint (or nonstrict cutpoint) 
       $ \frac{1}{2} $ by a RT-PFA with $ n $ internal states 
       can be recognized with cutpoint (or nonstrict cutpoint) $ \frac{1}{2} $ by a RT-KWQFA with $ O(n) $ internal states.
\end{lemma}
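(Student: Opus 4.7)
The plan is to embed $\mathcal{P}$'s computation as a \emph{signal} inside the non-halting subspace of $\mathcal{M}$, couple it to a constant-amplitude \emph{reference} branch through a distinguished accept/reject pair, and exploit coherent interference so that the sign of $f_{\mathcal{P}}(w)-\tfrac{1}{2}$ pushes the QFA's acceptance probability off $\tfrac{1}{2}$ in the correct direction. First, pass to an $(n+1)$-state GFA $\mathcal{G}$ with $f_{\mathcal{G}}(w)=f_{\mathcal{P}}(w)-\tfrac{1}{2}$ by adjoining a single ``shift'' state $q_{\text{shift}}$ fixed by every $A_{\sigma}^{\mathcal{G}}=A_{\sigma}\oplus[1]$, taking $v_{0}^{\mathcal{G}}=\ket{q_{1}}+\ket{q_{\text{shift}}}$ and $f^{\mathcal{G}}=\sum_{q_{i}\in Q_{a}}\bra{q_{i}}-\tfrac{1}{2}\bra{q_{\text{shift}}}$; now $w\in L$ iff $f_{\mathcal{G}}(w)>0$.

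Build $\mathcal{M}$ with an $(n+1)$-dimensional signal subspace identified with the states of $\mathcal{G}$, one invariant non-halting reference state $\ket{s_{\text{ref}}}$, two distinguished halting states $\ket{A^{*}}$ and $\ket{R^{*}}$, and an $O(n)$-dimensional auxiliary \emph{twin halting} subspace of accept/reject pairs that always receive equal coefficients. The initial operator $U_{\cent}$ prepares $\alpha\ket{\xi_{0}}+\beta\ket{s_{\text{ref}}}+(\text{twin halting})$ from $\mathcal{M}$'s initial state, where $\ket{\xi_{0}}$ is the normalized $A_{\cent}^{\mathcal{G}}v_{0}^{\mathcal{G}}$ expressed in the signal basis and $\alpha,\beta>0$. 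For each $\sigma\in\Sigma$, $U_{\sigma}$ fixes $\ket{s_{\text{ref}}}$ and on the signal subspace acts as the contraction $A_{\sigma}^{\mathcal{G}}/c$ with $c=\sqrt{n+1}\ge\|A_{\sigma}^{\mathcal{G}}\|$, dilated to a unitary by attaching, to each signal basis vector, a twin halting component. The required Gram matrix of these components equals $\tfrac{1}{2}\bigl(I-(A_{\sigma}^{\mathcal{G}})^{T}A_{\sigma}^{\mathcal{G}}/c^{2}\bigr)$, which is positive semidefinite of rank at most $n+1$, so the halting components fit into a fixed $O(n)$-dimensional twin halting subspace reusable across all $\sigma$.

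At the $\dollar$-step, $U_{\dollar}$ sends each $\ket{q_{i}^{\mathcal{G}}}$ to $\tfrac{g[i]}{2c_{g}}\ket{A^{*}}-\tfrac{g[i]}{2c_{g}}\ket{R^{*}}+(\text{twin halting})$ with $g^{T}=(f^{\mathcal{G}})^{T}A_{\dollar}^{\mathcal{G}}$ and $c_{g}\ge\|g\|$, and sends $\ket{s_{\text{ref}}}$ to $\tfrac{1}{\sqrt{2}}\bigl(\ket{A^{*}}+\ket{R^{*}}\bigr)$; orthogonality between the images of $\ket{s_{\text{ref}}}$ and of any $\ket{q_{i}^{\mathcal{G}}}$ is automatic because the signal's $\pm\tfrac{g[i]}{2c_{g}}$ contributions in $\ket{A^{*}}$ and $\ket{R^{*}}$ cancel against the $\tfrac{1}{\sqrt{2}}$'s, and the residual Gram condition $I-gg^{T}/c_{g}^{2}$ on the signal halting components is again PSD of rank $\le n+1$. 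Because every intermediate halting mass and every non-distinguished halting mass at the $\dollar$-step is twin-balanced, a straightforward telescoping of acceptance-minus-rejection probabilities yields $f_{\mathcal{M}}(w)=\tfrac{1}{2}+2rs(w)$, where $r=\beta/\sqrt{2}>0$ is a fixed positive constant and $s(w)$ is a positive multiple of $f_{\mathcal{G}}(w)=f_{\mathcal{P}}(w)-\tfrac{1}{2}$. Hence $f_{\mathcal{M}}(w)>\tfrac{1}{2}\iff f_{\mathcal{P}}(w)>\tfrac{1}{2}$ and $f_{\mathcal{M}}(w)=\tfrac{1}{2}\iff f_{\mathcal{P}}(w)=\tfrac{1}{2}$, giving both the strict and nonstrict cutpoint cases in one construction.

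The main obstacle is bookkeeping: one must verify the Gram-matrix bounds at both the $\sigma$- and $\dollar$-steps and reuse the same halting subspace across all symbols in order to keep the total halting-state count $O(n)$. The crucial bound $\|A_{\sigma}^{\mathcal{G}}\|\le\sqrt{n}$ follows from the column stochasticity of $A_{\sigma}$, which forces both Gram matrices to be PSD of rank at most $n+1$; together with the $n+2$ non-halting states this yields a RT-KWQFA with $O(n)$ internal states, completing the proof.
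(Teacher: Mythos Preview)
Your construction is correct and lands on the same core technique as the paper: scale each stochastic matrix $A_{\sigma}$ by a constant so that it becomes a contraction on the non-halting subspace, then dilate to a unitary by routing the defect into accept/reject ``twin'' pairs that cancel in the final tally. Both proofs therefore yield a RT-KWQFA with $O(n)$ states, and your Gram-matrix bookkeeping (PSD of rank $\le n+1$, hence realizable in a fixed $O(n)$-dimensional halting block reusable across symbols) is exactly the abstract version of the paper's explicit Gram--Schmidt-style construction of the upper-triangular matrix $B_{\sigma}$.

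Where you diverge is in the final read-out. You first pass to a GFA computing $f_{\mathcal{P}}(w)-\tfrac{1}{2}$ and then recover its sign by interfering the signal branch with a constant-amplitude reference branch in the distinguished pair $\ket{A^{*}},\ket{R^{*}}$, obtaining $f_{\mathcal{M}}(w)=\tfrac{1}{2}+2rs(w)$. The paper avoids this detour: it augments the PFA with two extra states $q_{n+1}\in Q_{a}$ and $q_{n+2}\in Q_{r}$ so that after the $\dollar$-step the (scaled) non-halting amplitudes are exactly $f_{\mathcal{P}}(w)$ and $1-f_{\mathcal{P}}(w)$; then $f_{\mathcal{M}}(w)-\tfrac{1}{2}$ is proportional to $f_{\mathcal{P}}(w)^{2}-(1-f_{\mathcal{P}}(w))^{2}=2f_{\mathcal{P}}(w)-1$, with no reference branch needed. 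Your interference mechanism is a perfectly good alternative and would generalize nicely to simulating an arbitrary GFA rather than a PFA, but here it is extra machinery; the paper's direct two-state read-out is simpler. (A couple of the Gram-matrix constants you quote at the $\dollar$-step are off by factors of two, but this is harmless given your choice $c_{g}\ge\|g\|$.)
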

\begin{proof}
	Let $ L $ be a language recognized by an $ n $-state RT-PFA  
	\[ \mathcal{P} =(Q,\Sigma,\{A_{\sigma} \mid \sigma \in \tilde{\Sigma} \},q_{1},Q_{a}) \]
	with (nonstrict) cutpoint $ \frac{1}{2} $.
	We will construct a RT-KWQFA 
	\[ \mathcal{M}=(R, \Sigma, \{U_{\sigma} \mid \sigma \in \tilde{\Sigma} \}, r_{1},
		R_{a}, R_{r}) \] 
	which has $ 3n+6 $ internal states, and recognizes $ L $ with (nonstrict) cutpoint $ \frac{1}{2} $. The idea is to ``embed" the (not necessarily unitary) matrices $A_{\sigma}$ of the RT-PFA within the larger unitary matrices $U_{\sigma}$ of the RT-KWQFA.
	
	We define $ Q^{\prime} $, $ v_{0}^{\prime} $, and $ \{ A'_{\sigma} \mid \sigma \in \tilde{\Sigma} \} $ as follows:
	\begin{enumerate}
		\item $ Q^{\prime} = Q \cup \{ q_{n+1}, q_{n+2} \} $;
		\item $ v_{0}^{\prime} = (1,0,\ldots,0)^{T} $  is an $ (n+2) $-dimensional column vector;
		\item Each $ A_{\sigma}^{\prime} $ is a $ (n+2) \times (n+2) $-dimensional matrix:
			for each $ \sigma \in \Sigma \cup \{\cent\} $,
			\[			
             A'_{\sigma}= \left(
				\begin{array}{c|c}
					A_{\sigma} &  0_{n \times 2}  \\
					\hline
					0_{2 \times n} & I_{2 \times 2} \\
                     \end{array}
			\right)
			\]
			and 
			\[
             A_{\dollar}^{\prime}=\left(
				\begin{array}{c|c}
					0_{n \times n} & 0_{2 \times n}  \\
					\hline
					T_{2 \times n} & I_{2 \times 2} \\
				\end{array}
			\right)
			\left(
				\begin{array}{c|c}
					A_{\dollar} &  0_{n \times 2}  \\
					\hline
					0_{2 \times n} & I_{2 \times 2} \\
                     \end{array}
			\right),
			\]
     where $ T(1,i)=1 $ and $ T(2,i)=0 $ when $ q_{i} \in Q_{a} $, and
           $ T(1,i)=0 $ and $ T(2,i)=1 $ when $ q_{i} \notin Q_{a} $,
           for $ 1 \le i \le n $.
     \end{enumerate}
     For a given input $ w \in \Sigma^{*} $, 
     \begin{equation}
		v_{| \tilde{w} |}^{\prime} = A_{\dollar}^{\prime} A_{w_{|w|}}^{\prime} \cdots
			A_{w_{1}}^{\prime} A_{\cent}^{\prime} v_{0}'.
	\end{equation}
     It can easily be verified that
     \[ v_{| \tilde{w}|}^{\prime} = (0_{1 \times n} \mid f_{\mathcal{P}}(w), 1 - f_{\mathcal{P}}(w))^{T}. \]
     For each $ A^{\prime}_{\sigma} $, we will construct a 
     $ (n+2) \times (n+2) $-dimensional upper triangular matrix $ B_{\sigma} $
     so that the columns of 
     \begin{equation}
		\frac{1}{l} \left( \begin{array}{c} A_{\sigma}^{\prime} \\ \hline B_{\sigma}	\end{array} \right)
	\end{equation}
	form an orthonormal set, where $ l $ will be defined later.
	For this purpose, 
	the entries of  $ B_{\sigma} $, say $ b_{i,j} $ representing $ B_{\sigma}[i,j] $ for $ 1 \le i,j \le n+2 $,
	can be computed iteratively using the following procedure: 
     \begin{enumerate}
		\item Initialize all entries of $ B_{\sigma} $ to 0.
		\item Update the entries of $ B_{\sigma} $ to make the length of each column of 
			$ \left( \begin{array}{c} A_{\sigma}^{\prime} \\ \hline B_{\sigma} \end{array} \right) $ equal to $ l $
			and also to make the columns of 
			$ \left( \begin{array}{c} A_{\sigma}^{\prime} \\ \hline B_{\sigma} \end{array} \right) $
			pairwise orthogonal, by executing the following loop: 
			\\\\
			\begin{tabular}{ll}
				i.   & for $ i=1 $ to $ n+2 $ \\
				ii.  & ~~~~set $ l_{i} $ to the current length of the $ i^{th} $ column \\
				iii. & ~~~~set $ b_{i,i} $ to $ -\sqrt{l^{2}-l_{i}^{2}} $ \\
				iv.  & ~~~~for $ j=i+1 $ to $ n+2 $ \\
				v.   & ~~~~~~~~set $ b_{i,j} $ to some nonnegative value so that the $ i^{th} $ and $ j^{th} $ 
					 				\\
					 & ~~~~~~~~columns can become orthogonal \\
			\end{tabular} 
	\end{enumerate}
	The loop does not work properly if the value of $ l_{i} $, calculated at the (ii)$ ^{nd} $ step, 
	is greater than $ l $. Therefore, the value of $ l $ should be set carefully.
	For instance, by setting $ l $ to $ 2n+7 $, the following bounds can be easily verified
	for each iteration of the loop:
	\begin{itemize}
		\item $ l_{i} < 2 $ at the (ii)$ ^{nd} $ step;
		\item $ 2n+6 < |b_{i,i}| < 2n+7 $ at the (iii)$ ^{rd} $ step;
		\item $ 0 \le b_{j,i} < \frac{1}{n+3} $ at the (v)$ ^{th} $ step.
	\end{itemize}
	We define
	\begin{equation}
		U_{\sigma}=\left( 
			\begin{array}{c|c}				
				\begin{array}{c}
					A_{\sigma}^{\prime\prime}
					\\ \hline 
					B_{\sigma}^{\prime}
					\\ \hline
					B_{\sigma}^{\prime\prime}
				\end{array}
				 &
				D_{\sigma}
			\end{array}
		\right),
	\end{equation}
	where $ A_{\sigma}^{\prime\prime} = \frac{1}{l} A_{\sigma}^{\prime} $, 
	$ B_{\sigma}^{\prime} = B_{\sigma}^{\prime\prime} = \frac{1}{\sqrt{2}l} B_{\sigma} $, and 
	the entries of $ D_{\sigma} $ are selected to make $ U_{\sigma} $ 
	a unitary matrix.
	
	The state set $ R = R_{n} \cup R_{a} \cup R_{r}  $ is specified as:
	\begin{enumerate}
		\item $ r_{n+1} \in R_{a} $ corresponds to state $ q_{n+1} $;
		\item $ r_{n+2} \in R_{r} $ corresponds to state $ q_{n+2} $;
		\item $ \{r_{1},\ldots,r_{n}\} \in R_{n} $ correspond to the states of $ Q $, 
		where $ r_{1} $ is the start state;
		\item All the states defined for the rows of $ B_{\sigma}^{\prime} $ and $ B_{\sigma}^{\prime\prime} $ are
		respectively accepting and rejecting states.
	\end{enumerate}
	
	$ \mathcal{M} $ simulates the computation of $ \mathcal{P} $
	for the input string $ w $ by multiplying the amplitude of each non-halting state with
	$ \frac{1}{l} $ in each step. 
	Hence, the top $ n+2 $ entries of the state vector of $ \mathcal{M} $ equal
	\[	\left( \frac{1}{l} \right)^{|w|+2}
		\left(0_{1 \times n} \mid f_{\mathcal{P}}(w), 1- f_{\mathcal{P}}(w) \right)^{T}
	\]
	just before the last measurement on the right end-marker. 
	Note that, the halting states, except $ q_{n+1} $ and $ q_{n+2} $, will come in accept/reject pairs, so
	that transitions to them during the computation will add equal amounts to the overall 
	acceptance and rejection probabilities, and therefore will not affect the decision on the membership of the
	input in $ L $. We conclude that 
	\begin{equation}
		f_{\mathcal{M}}(w) > \frac{1}{2} \mbox{ if and only if } f_{\mathcal{P}}(w) > \frac{1}{2},
	\end{equation}
	and
	\begin{equation}
		f_{\mathcal{M}}(w) \geq \frac{1}{2} \mbox{ if and only if } f_{\mathcal{P}}(w) \geq \frac{1}{2}.
	\end{equation}
\qed\end{proof}

\begin{theorem}
       \label{theorem:SLUMM}
       The class of languages recognized by RT-KWQFAs with unbounded error is uS (uQAL).
\end{theorem}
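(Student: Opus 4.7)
The plan is to prove the theorem by a double inclusion, with the bulk of the work already done in the preceding results.

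For the forward inclusion $\mathrm{uS} \subseteq \mathrm{uQAL_{KW}}$ (where $\mathrm{uQAL_{KW}}$ denotes the class recognized by RT-KWQFAs with unbounded error), I would take an arbitrary $L \in \mathrm{uS} = \mathrm{S} \cup \mathrm{coS}$. Suppose first $L \in \mathrm{S}$, so there exists a RT-PFA $\mathcal{P}$ recognizing $L$ with strict cutpoint $\lambda \in [0,1]$. The first step is to invoke the standard cutpoint-shift construction on $\mathcal{P}$: by prepending a single probabilistic branching in which, with constants $p_1,p_2,p_3 \ge 0$ summing to $1$, the machine either simulates $\mathcal{P}$ (with probability $p_1$), immediately accepts (probability $p_2$), or immediately rejects (probability $p_3$), the overall acceptance probability becomes $p_1 f_{\mathcal{P}}(w) + p_2$. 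Choosing $p_1,p_2$ so that $p_1\lambda + p_2 = \tfrac{1}{2}$ (which is feasible for every $\lambda$) yields a new RT-PFA $\mathcal{P}'$ recognizing $L$ with strict cutpoint exactly $\tfrac{1}{2}$; the same construction preserves nonstrict cutpoint and handles the case $L \in \mathrm{coS}$. Applying Lemma \ref{lemma:rt-kwqfa} to $\mathcal{P}'$ produces a RT-KWQFA recognizing $L$ with (strict or nonstrict) cutpoint $\tfrac{1}{2}$, hence with unbounded error.

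For the reverse inclusion $\mathrm{uQAL_{KW}} \subseteq \mathrm{uS}$, I would argue that every RT-KWQFA is a special case of a RT-QFA. A RT-KWQFA step consists of applying a unitary $U_\sigma$ followed by the three-outcome projective measurement $\{P_n,P_a,P_r\}$; this is equivalent to the superoperator with operation elements $\{P_n U_\sigma,\, P_a U_\sigma,\, P_r U_\sigma\}$, which satisfies the completeness relation $\sum_\tau (P_\tau U_\sigma)^\dagger (P_\tau U_\sigma) = U_\sigma^\dagger U_\sigma = I$. Designating the halting accepting configurations as the set $Q_a$ of the RT-QFA (and using one extra absorbing state on the reject branch if needed to accumulate the rejection probability) exhibits the RT-KWQFA as a RT-QFA computing the same acceptance probability on every input. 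Consequently any language recognized by a RT-KWQFA with cutpoint (resp.\ nonstrict cutpoint) lies in $\mathrm{QAL}$ (resp.\ $\mathrm{coQAL}$), so with unbounded error it lies in $\mathrm{uQAL}$. By the Corollary to Lemma \ref{lemma:RT-QFA-to-GFA} we have $\mathrm{QAL} = \mathrm{S}$, and a symmetric argument (the same GFA simulation, with the opposite inequality) gives $\mathrm{coQAL} = \mathrm{coS}$, whence $\mathrm{uQAL} = \mathrm{uS}$.

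Combining the two inclusions gives the theorem. The main obstacle is arguably cosmetic: one must be careful that the cutpoint-shift in the forward direction preserves the distinction between strict and nonstrict recognition (to cover $\mathrm{S}$ and $\mathrm{coS}$ uniformly), and that the RT-KWQFA-to-RT-QFA embedding in the reverse direction faithfully accounts for the step-by-step measurement and accumulation of halting probability. Both points are routine, and all the essential quantum content has been supplied by Lemmas \ref{lemma:RT-QFA-to-GFA} and \ref{lemma:rt-kwqfa}.
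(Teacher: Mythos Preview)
Your proposal is correct and follows essentially the same route as the paper's one-line proof: Lemma~\ref{lemma:rt-kwqfa} (plus the standard cutpoint normalization to $\tfrac12$, which is what the citation of \cite{Tu69} covers) gives $\mathrm{uS}\subseteq\mathrm{uQAL_{KW}}$, while the observation that a RT-KWQFA is a special RT-QFA together with Lemma~\ref{lemma:RT-QFA-to-GFA} and Turakainen's GFA characterization of $\mathrm{S}$ gives the reverse inclusion and the identification with $\mathrm{uQAL}$. You have simply unpacked the paper's three citations into explicit steps.
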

\begin{proof}
       Follows from Lemma~\ref{lemma:rt-kwqfa}, Lemma \ref{lemma:RT-QFA-to-GFA} and \cite{Tu69}.
\qed\end{proof}

\begin{corollary}
	UMM = QAL $ \cap $ coQAL = S $ \cap $ coS.
\end{corollary}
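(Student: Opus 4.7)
The plan is to establish the two equalities separately. The second equality, $\mbox{QAL} \cap \mbox{coQAL} = \mbox{S} \cap \mbox{coS}$, falls out immediately from the Corollary following Lemma~\ref{lemma:RT-QFA-to-GFA} (which gives QAL $=$ S) together with its complement-symmetric counterpart coQAL $=$ coS. The latter equality follows by noting that for any RT-QFA with acceptance projector $P_a$, replacing $P_a$ by $I - P_a$ yields another RT-QFA computing $1 - f_{\mathcal{M}}(w)$, so $L \in \mbox{coQAL}$ via nonstrict cutpoint $\lambda$ iff $L^c \in \mbox{QAL}$ via strict cutpoint $1-\lambda$; the same relation holds between S and coS via RT-PFAs, so coQAL $=$ coS is just QAL $=$ S phrased on complements.

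For the first equality $\mbox{UMM} = \mbox{S} \cap \mbox{coS}$, I would prove the two inclusions in turn. For $\mbox{UMM} \subseteq \mbox{S} \cap \mbox{coS}$: if $L \in \mbox{UMM}$ is witnessed by a RT-KWQFA $\mathcal{M}$ with separator $\lambda$, then the defining condition ``$f_{\mathcal{M}}(w) > \lambda$ when $w \in L$ and $f_{\mathcal{M}}(w) < \lambda$ when $w \notin L$'' means the very same machine recognizes $L$ with strict cutpoint $\lambda$ (placing $L$ in QAL, hence in S via Lemma~\ref{lemma:RT-QFA-to-GFA}) and simultaneously with nonstrict cutpoint $\lambda$ (placing $L$ in coQAL, hence in coS).

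For the reverse inclusion $\mbox{S} \cap \mbox{coS} \subseteq \mbox{UMM}$, I would invoke the standard characterization (going back to Rabin/Turakainen) that a language lies in $\mbox{S} \cap \mbox{coS}$ iff it is recognized by some RT-PFA $\mathcal{P}$ with an \emph{exclusive} cutpoint, i.e.\ one that is never actually attained: $f_{\mathcal{P}}(w) \neq \lambda$ for all $w$. After shifting this cutpoint to $\frac{1}{2}$ by the well-known convex-combination trick (run $\mathcal{P}$ in parallel with a fixed-bias coin and relabel accepting states), the ``never-attained'' property is preserved. Lemma~\ref{lemma:rt-kwqfa} then produces a RT-KWQFA $\mathcal{M}$ satisfying $f_{\mathcal{M}}(w) > \tfrac{1}{2} \Leftrightarrow f_{\mathcal{P}}(w) > \tfrac{1}{2}$ and $f_{\mathcal{M}}(w) \geq \tfrac{1}{2} \Leftrightarrow f_{\mathcal{P}}(w) \geq \tfrac{1}{2}$; combining these with exclusivity gives $f_{\mathcal{M}}(w) > \tfrac{1}{2}$ when $w \in L$ and $f_{\mathcal{M}}(w) < \tfrac{1}{2}$ when $w \notin L$, which is precisely the UMM condition.

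The only real technical wrinkle is the normalization step: one must verify that the convex-combination trick genuinely sends an exclusive-cutpoint RT-PFA to another exclusive-cutpoint RT-PFA with cutpoint $\frac{1}{2}$, and that Lemma~\ref{lemma:rt-kwqfa} transports both the strict and nonstrict inequalities in a single invocation (as its statement explicitly asserts). Both points are routine, so I do not anticipate genuine obstacles.
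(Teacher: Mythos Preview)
Your argument is correct, but the reverse inclusion is organized differently from the paper. The paper proves $\mbox{QAL}\cap\mbox{coQAL}\subseteq\mbox{UMM}$ directly on the quantum side: given $L\in\mbox{QAL}\cap\mbox{coQAL}$, it takes two RT-KWQFAs $\mathcal{M}_1,\mathcal{M}_2$ witnessing the strict and nonstrict cutpoints at $\tfrac12$ (these exist by Lemma~\ref{lemma:rt-kwqfa}), and forms a single RT-KWQFA $\mathcal{M}_3$ that runs each with probability $\tfrac12$; the averaged acceptance probability is then strictly above $\tfrac12$ on $L$ and strictly below on $L^c$. You instead push the averaging down to the probabilistic level, packaging it as the ``exclusive cutpoint'' characterization of $\mbox{S}\cap\mbox{coS}$, and then invoke Lemma~\ref{lemma:rt-kwqfa} once. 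The underlying mechanism is the same convex combination; your route has the advantage that you need not check that the equal-probability mixture of two RT-KWQFAs is again a RT-KWQFA, while the paper's route is more self-contained in that it does not appeal to an external characterization. One caveat: the exclusive-cutpoint fact you cite as ``standard (Rabin/Turakainen)'' is correct and easy, but since its proof is precisely the averaging argument the paper uses, you should be prepared to supply those two lines rather than treat it as a black box.
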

\begin{proof}
	It is obvious that UMM $ \subseteq $ QAL $ \cap $ coQAL. Let $ L \in $ QAL $ \cap $ coQAL.
	Then, there exist two RT-KWQFAs $ \mathcal{M}_{1} $ and $ \mathcal{M}_{2} $ such that
	for all $ w \in L $, $ f_{\mathcal{M}_{1}} (w) > \frac{1}{2} $
	and $ f_{\mathcal{M}_{2}} (w) \geq \frac{1}{2} $, and
	for all $ w \notin L $, $ f_{\mathcal{M}_{1}} (w) \leq \frac{1}{2} $
	and $ f_{\mathcal{M}_{2}} (w) < \frac{1}{2} $.
	Let $ \mathcal{M}_{3} $ be a RT-KWQFA running $ \mathcal{M}_{1} $ and $ \mathcal{M}_{2} $
	with equal probability. Thus, we obtain that for all $ w \in L $,
	$ f_{\mathcal{M}_{3}} (w) > \frac{1}{2}$, and for all $ w \notin L $,
	$ f_{\mathcal{M}_{3}} (w) < \frac{1}{2} $. 
	Therefore, $ L \in  $ UMM.
\qed\end{proof}

Considering this result together with Theorem \ref{theorem:1KWQFA}, we
conclude that, unlike classical deterministic and probabilistic finite automata,
allowing the tape head to ``stay put" for
some steps during its left-to-right traversal of the input increases
the language recognition power of
quantum finite automata in the unbounded error case.

Since unbounded-error RT-PFAs and 2PFAs are equivalent in
computational power \cite{Ka91}, we are now able to state the
following corollary to Theorem \ref{theorem:1KWQFA}:
\begin{corollary}
	The class of languages recognized with unbounded error by
	constant-space PTMs is a proper subclass of the respective class for
	QTMs with restricted measurements.
\end{corollary}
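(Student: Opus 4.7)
The plan is to combine two facts that are essentially in place earlier in the paper: an unbounded-error simulation of constant-space PTMs by restricted-measurement QTMs, and an explicit separating language. So the proof will have two parts --- inclusion and strictness.

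For the inclusion $\mathrm{PrSPACE}(1) \subseteq$ (the quantum class), I start with an arbitrary $L \in \mathrm{PrSPACE}(1)$, recognized with unbounded error by a constant-space PTM, i.e.\ a 2PFA. By Kaneps' theorem \cite{Ka91}, cited in the paragraph just before the corollary, unbounded-error 2PFAs and RT-PFAs recognize the same languages, so there is a RT-PFA $\mathcal{P}$ recognizing $L$ with some cutpoint. A standard cutpoint-shifting move --- prepending a single auxiliary coin flip of an appropriate computable bias, implementable as one extra state --- rewrites $\mathcal{P}$ into an equivalent RT-PFA with cutpoint exactly $\frac{1}{2}$ while keeping all transition probabilities in $\tilde{\mathbb{R}}$. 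Lemma~\ref{lemma:rt-kwqfa} then converts this RT-PFA into a RT-KWQFA recognizing $L$ with cutpoint $\frac{1}{2}$, and a RT-KWQFA is manifestly a constant-space QTM with restricted measurements (unidirectional, no work tape, measurement set $\{n,a,r\}$).

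For strictness, I use $L_{NH}$ as the witness. Theorem~\ref{theorem:1KWQFA} already constructs a 1KWQFA recognizing $L_{NH}$ with unbounded error, and a 1KWQFA is a fortiori a constant-space QTM with restricted measurements. On the other side, the first Fact cited from Freivalds and Karpinski \cite{FK94} forbids any PTM using $o(\log\log n)$ space --- in particular any constant-space PTM --- from recognizing $L_{NH}$ with unbounded error. Hence $L_{NH}$ sits in the quantum class but outside $\mathrm{PrSPACE}(1)$, which delivers the strictness.

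The only thing that requires any care is checking that the two translations in the inclusion direction (Kaneps' simulation and Lemma~\ref{lemma:rt-kwqfa}'s construction) preserve the ``computable probabilities'' restriction, so that the final RT-KWQFA is a legitimate restricted-measurement QTM in the sense of Section~3 with amplitudes in $\tilde{\mathbb{C}}$. Inspection suffices: the cutpoint shift only introduces rational factors, and Lemma~\ref{lemma:rt-kwqfa}'s amplitudes are algebraic in the input RT-PFA probabilities, so they stay in $\tilde{\mathbb{C}}$. I expect no genuine mathematical obstacle here, since the heavy lifting has already been done in Theorem~\ref{theorem:1KWQFA} (the quantum algorithm for $L_{NH}$) and Lemma~\ref{lemma:rt-kwqfa} (the stochastic-to-unitary embedding).
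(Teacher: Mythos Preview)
Your argument is correct and matches the paper's intended proof: inclusion via Kaneps' 2PFA$\leftrightarrow$RT-PFA equivalence composed with Lemma~\ref{lemma:rt-kwqfa}, strictness via $L_{NH}$ using Theorem~\ref{theorem:1KWQFA} together with Fact~1. One small nit: the coin-flip cutpoint shift as you describe it presupposes the original cutpoint $\lambda$ is itself computable, which the definition of unbounded error does not guarantee; this is harmless, since the standard Turakainen/Paz normalization (already cited in the paper) gives directly an RT-PFA with rational entries and cutpoint $\tfrac12$, to which Lemma~\ref{lemma:rt-kwqfa} then applies.
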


Also note that, since the algorithm described in the proof of Theorem
\ref{theorem:1KWQFA} is
presented for a 1KWQFA, Corollary \ref{corollary:coCSPACE-coCQSPACE} is still valid when $
\mbox{coC}_{=}\mbox{QSPACE}(1) $
is defined for QTMs with restricted measurements.

\section{Concluding remarks} \label{section:ConcludingRemarks}

In this paper, we examined the capabilities of quantum Turing machines
operating under small space bounds in the
unbounded error setting. We proved that QTMs are strictly superior to
PTMs for all common space bounds that are $ o(\log \log n) $, and this
superiority extends to all sublogarithmic bounds when the machines are
allowed only one-way input head movement. We also gave a full
characterization of the class of languages recognized by real-time QFAs
employing restricted measurements; they turn out to be equivalent to
their probabilistic counterparts. It was also shown that
allowing the tape head to ``stay put" for some steps during its
left-to-right traversal of the input increases the
language recognition power of quantum finite automata in the unbounded
error case, allowing them to recognize some nonstochastic languages.
This means that two-way (and even one-way) QFAs are strictly more
powerful than RT-QFAs; whereas 2DFAs and unbounded-error 2PFAs are known
to be equivalent in power to their real-time versions \cite{Sh59,Ka91}.

While we have established some new results relating to the
relationship of probabilistic and quantum complexity classes in this
paper, the work reported here also gives rise to some new open
questions. As already mentioned, Watrous proved the equality
PrQSPACE($s$)=PrSPACE($s$) ($ s \in \Omega(\log n)$) for the cases where
PrQSPACE is defined in terms of Wa98-QTMs \cite{Wa98,Wa99}, and Wa03-QTMs \cite{Wa03}. 
We do not know how to prove these results for our
more general QTMs, and so the most that we can say about the
relationship among these classes now is PrSPACE($s$) $ \subsetneq $ PrQSPACE($s$) 
($ s \in o(\log \log n)$), and PrSPACE($s$) $ \subseteq $ PrQSPACE($s$) for all $s$. 
The only efficient simulation technique of a
quantum machine by a probabilistic machine that remains valid for our
definitions is that of Lemma \ref{lemma:RT-QFA-to-GFA}.

The reader may wonder why we did not present QTMs to be unidirectional by definition. 
The reason is that the known techniques \cite{BV97} for converting QTMs with arbitrary head movements to 
unidirectional QTMs do not work for the space-bounded case when the stationary 
``move" ($ \downarrow $) is included in the set of allowed head directions, 
and we stuck to the general definition to avoid any possibility of an unnecessary limitation 
of computational power.

After it was discovered in the context of this research, the
simulation method presented in Lemma \ref{lemma:rt-kwqfa} has been modified and
used in several contexts \cite{YS10A,YS10B,YS11B} to help establish
relationships between many different machine models.

Several real-time QFA variants have appeared in the literature. Our
results show that all of these which are at least as general as the
RT-KWQFA ($\mspace{-5mu}$\cite{Na99,Pa00,BMP03,Hi08}, and the real-time version
of the machines of \cite{AW02}) have the same computational power in
the unbounded error case. The class of languages recognized with
unbounded error by the weakest variant, the Moore-Crutchfield
QFA \cite{MC00}, is known \cite{BC01B} to be a proper subset of uS. 
One important model for
which no such characterization is yet known is the Latvian QFA \cite{ABGKMT06}.
Another question left open in this work
is the relationship between the computational powers of 1QFAs and 2QFAs.

\section*{Acknowledgements} \label{section:Acknowledgements}
We are grateful to Andris Ambainis and John Watrous for their helpful
comments on earlier versions of this paper.
We also thank R\={u}si\c{n}\v{s} Freivalds for his answers to our 
questions regarding nonstochastic languages, Maris Ozols for directing us to Watrous' lecture notes, Evgeniya Khusnitdinova for her kind help with Russian translations, and the two anonymous referees for their very useful suggestions.

\appendix

\section{Wellformedness conditions} \label{appendix:QTM-wellformedness}
\subsection{Local conditions for 2QFA wellformedness} \label{section:LC-2QFA-wellformedness}

Let $ c_{j_{1}} $ and $ c_{j_{2}} $ be two configurations, and
$ v_{j_{1}} $ and $ v_{j_{2}} $ be the corresponding columns of $ \mathsf{E} $ (See Figure \ref{figure:matrix-E}).
The value of $ v_{j_{1}}[i]  $ is determined by $ \delta $ if the $
i^{th} $ entry of $ v_{j_{1}} $
corresponds to a configuration to which $ c_{j_{1}} $ can evolve in one step,
and it is zero otherwise.
Let $ x_{1} $ and $ x_{2} $ be the positions of the input tape head
for the configurations $ c_{j_{1}} $ and $ c_{j_{2}} $, respectively.
In order to evolve to the same configuration in one step, the difference between
$ x_{1} $ and $ x_{2} $  must be at most 2.
Therefore, we obtain a total of three different cases, listed below, that completely
define the restrictions on the transition function.
Note that, by taking the conjugates of each summation,
we handle the symmetric cases that are shown in the parentheses.

For all $ q_{1},q_{2} \in Q; \sigma \in \tilde{\Sigma} $; 
(the summations are taken over $ q^{\prime} \in Q $; 
$ d \in \lrhd $; and $ \omega \in \Omega $),
\\
\footnotesize
1. $ x_{1} = x_{2} $:
\begin{equation}
	 \sum\limits_{q^{\prime} \in Q, d \in \lrhd,\omega \in \Omega}
      \overline{ \delta(q_{1},\sigma,q^{\prime},d,\omega) }
      \delta(q_{2},\sigma,q^{\prime},d,\omega) =
      \left\lbrace
              \begin{array}{ll}
                      1 & q_{1} = q_{2} \\
                      0 & \mbox{otherwise}
              \end{array}
      \right.
\end{equation}
\\
2. $ x_{1} = x_{2} - 1 $ ($ x_{1} = x_{2} + 1 $):
\begin{equation}
	\sum_{q^{\prime} \in Q,\omega \in \Omega} 
      \overline{\delta(q_{1},\sigma,q^{\prime},\rightarrow,\omega)} 
      \delta(q_{2},\sigma,q^{\prime},\downarrow,\omega) + 
      \overline{\delta(q_{1},\sigma,q^{\prime},\downarrow,\omega)} 
      \delta(q_{2},\sigma,q^{\prime},\leftarrow,\omega) = 0.
\end{equation}
\\
3. $ x_{1} = x_{2} - 2 $ ($ x_{1} = x_{2} + 2 $):
\begin{equation}
	\sum_{q^{\prime} \in Q,\omega \in \Omega}
	\overline{\delta(q_{1},\sigma,q^{\prime},\rightarrow,\omega)}
     \delta(q_{2},\sigma,q^{\prime},\leftarrow,\omega) = 0.
\end{equation}
\normalsize

\subsection{Unidirectional machines} \label{section:unidirectional-machines}

The wellformedness of unidirectional QTMs can be checked using the
simple conditions in Figure \ref{figure:uniQTM-wellformedness}.
Removing the reference to worktape symbols, we obtain the analogous
constraints for unidirectional 2QFAs as shown in Figure \ref{figure:uni2QFA-wellformedness}.

\begin{figure}[h]       
       \centering
       \fbox{
       \begin{minipage}{\textwidth}
               \footnotesize{
               For $ q_{1},q_{2} \in Q; \sigma \in \tilde{\Sigma};
\gamma_{1},\gamma_{2} \in \Gamma $,
               \begin{equation}
                       \sum\limits_{q^{\prime} \in Q,\gamma^{\prime} \in \Gamma,\omega \in \Omega}
                       \overline{ \delta(q_{1},\sigma,\gamma_{1},q^{\prime},\gamma^{\prime},\omega)
}
                       \delta(q_{2},\sigma,\gamma_{2},q^{\prime},\gamma^{\prime},\omega) =
                       \left\lbrace
                               \begin{array}{ll}
                                       1 & q_{1} = q_{2} \mbox{ and } \gamma_{1} = \gamma_{2} \\
                                       0 & \mbox{otherwise}
                               \end{array}
                       \right..
               \end{equation}
               }
       \end{minipage}
       }
       \caption{The local conditions for unidirectional QTM wellformedness}
       \label{figure:uniQTM-wellformedness}
\end{figure}

\begin{figure}[h]       
       \centering
       \fbox{
       \begin{minipage}{\textwidth}
               \footnotesize{
               For $ q_{1},q_{2} \in Q; \sigma \in \tilde{\Sigma} $,
               \begin{equation}
                       \sum\limits_{q^{\prime} \in Q,\omega \in \Omega}
                       \overline{ \delta(q_{1},\sigma,q^{\prime},\omega) }
                       \delta(q_{2},\sigma,q^{\prime},\omega) =
                       \left\lbrace
                               \begin{array}{ll}
                                       1 & q_{1} = q_{2} \\
                                       0 & \mbox{otherwise}
                               \end{array}
                       \right..
               \end{equation}
               }
       \end{minipage}
       }
       \caption{The local conditions for unidirectional 2QFA wellformedness}
       \label{figure:uni2QFA-wellformedness}
\end{figure}

As is the case with PTMs, the transition
function of a unidirectional QTM can be specified easily by transition matrices of the
form
$ \{ E_{\sigma,\omega} \} $, whose rows and columns are indexed by
(internal state, work tape symbol) pairs
for each $ \sigma \in \tilde{\Sigma} $ and $ \omega \in \Omega $. 
It can be verified that the wellformedness condition is then equivalent to the requirement that, 
for each  $ \sigma \in \tilde{\Sigma} $,
\begin{equation}
       \sum_{\omega \in \Omega} E_{\sigma,\omega}^{\dagger} E_{\sigma,\omega} = I.
\end{equation}

Similarly, for each $ \sigma \in \tilde{\Sigma} $ and $ \omega \in \Omega $,
well-formed unidirectional 2QFAs can be described by transition matrices of the form
$ \{ E_{\sigma,\omega} \} $, whose rows and columns are indexed by
internal states,
such that for each  $ \sigma \in \tilde{\Sigma} $,
\begin{equation}
       \sum_{\omega \in \Omega} E_{\sigma,\omega}^{\dagger} E_{\sigma,\omega} = I.
\end{equation}

\section{CQTMs} \label{appendix:CQTM}

To specialize our general QTM model in order to ensure that the head positions are classical,
we associate combinations of head movements with measurement outcomes.
There are 9 different pairs of possible movement directions ($
\lrhd^{2} = \{ \leftarrow, \downarrow, \rightarrow \} \times \{
\leftarrow, \downarrow, \rightarrow \} $) for the input and work tape
heads,
and so we can classify register symbols with the function
\begin{equation}
       D_{r} : \Omega \rightarrow {\lrhd}^{2}.
\end{equation}
We have $ D_{r}( \omega ) = ( \downarrow, \downarrow ) $
if $ \omega \in \Omega_{a} \cup \Omega_{r} $.
We split $ \Omega_{n} $ into $ 9 $ parts, i.e.
\begin{equation}
       \Omega_{n} = \bigcup\limits_{d_{i},d_{w} \in \lrhd }
\Omega_{n,d_{i},d_{w}},
\end{equation}
where
\begin{equation}
       \Omega_{n,d_{i},d_{w}} = \{ \omega \in \Omega_{n} \mid D_{r}(\omega)=(d_{i},d_{_{w}})  \}.
\end{equation}
Therefore, the outcome set will have 11 elements, represented as
triples, specified as follows:
\begin{enumerate}
       \item ``$ (n,d_{i},d_{w}) $": the computation continues and the
positions of the input and work tape heads
               are updated     with respect to $ d_{i} $ and $ d_{w} $, respectively;
       \item ``$ (a,\downarrow,\downarrow) $": the computation halts and the
input is accepted with no head movement;
       \item ``$ (r,\downarrow,\downarrow) $": the computation halts and the
input is rejected with no head movement.
\end{enumerate}

The transition function of CQTMs will be
specified so that
when the CQTM is in state $ q $ and reads $ \sigma $ and $ \gamma $
respectively on the input and work tapes,
it will enter state $ q^{\prime} $, and write $ \gamma^{\prime} $ and
$ \omega $ respectively
on the work tape and the finite register with the amplitude
\begin{equation}
       \delta(q,\sigma,\gamma,q^{\prime},\gamma^{\prime},\omega) \in \tilde{\mathbb{C}} .
\end{equation}
Since the update of the positions of the input and work tape heads is
performed classically, it is no longer a part of the transitions.
Note that the transition function of 2QFAs with classical head
(2CQFAs) \cite{AW02}
is obtained by removing the mention of the work tape from the above description.

Moreover, as with unidirectional QTMs (resp. unidirectional 2QFAs),
for each $ \sigma \in \tilde{\Sigma} $ and $ \omega \in \Omega $,
CQTMs (2CQFAs) can be described by transition matrices
$ \{ E_{\sigma,\omega} \} $ satisfying the same properties. (See Appendix \ref{appendix:QTM-wellformedness}.)

As also argued in \cite{Wa03},
CQTMs are sufficiently general for simulating any classical TM.
We will present a trivial simulation.

\begin{lemma}
       CQTMs can simulate any PTM exactly.
\end{lemma}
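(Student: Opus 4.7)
The plan is to construct, from a given PTM $\mathcal{P} = (Q, \Sigma, \Gamma, \delta_P, q_1, Q_a, Q_r)$, a CQTM $\mathcal{M}$ whose one-step dynamics, after the register measurement, reproduces exactly $\mathcal{P}$'s distribution over configurations. The crux of the idea is to set every transition amplitude of $\mathcal{M}$ to the square root of the corresponding probability in $\delta_P$, while tagging each branch by a distinct register symbol so that the post-measurement density matrix remains diagonal in the configuration basis and therefore behaves classically.

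Concretely, I would reuse the states, tape alphabets, initial state, and accepting/rejecting state partition of $\mathcal{P}$ in $\mathcal{M}$. For the register, I would introduce one symbol $\omega_{q, \gamma, q', \gamma'}$ for each source-target pair $((q, \gamma), (q', \gamma'))$ admitting a positive-probability transition in $\mathcal{P}$, and place $\omega_{q, \gamma, q', \gamma'}$ into $\Omega_{n, D_i(q'), D_w(q')}$ if $q' \in Q_n$, into $\Omega_a$ if $q' \in Q_a$, and into $\Omega_r$ otherwise, so that the register measurement automatically performs the classical head update and halt/accept/reject action prescribed by $\mathcal{P}$. The transition function is then
\begin{equation*}
	\delta_M(q, \sigma, \gamma, q', \gamma', \omega_{q, \gamma, q', \gamma'}) = \sqrt{\delta_P(q, \sigma, \gamma, q', \gamma')},
\end{equation*}
with all other amplitudes set to zero.

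To establish wellformedness it suffices, by the unidirectional CQTM condition in Appendix A, to verify $\sum_{\omega} E_{\sigma, \omega}^{\dagger} E_{\sigma, \omega} = I$ for every $\sigma \in \tilde{\Sigma}$. Because each $\omega_{q, \gamma, q', \gamma'}$ yields an $E_{\sigma, \omega}$ supported only on the column indexed by $(q, \gamma)$, distinct sources never share a register tag; hence the off-diagonal entries of the sum vanish, and its $(q, \gamma)$-th diagonal entry collapses to $\sum_{q', \gamma'} \delta_P(q, \sigma, \gamma, q', \gamma') = 1$ by well-formedness of $\mathcal{P}$. A straightforward induction on the step count then shows that, at every step, the density matrix of $\mathcal{M}$ is diagonal in the configuration basis and its diagonal coincides with $\mathcal{P}$'s probability distribution over configurations; orthogonality of register tags ensures that no off-diagonal cross terms are created when $\sum_\omega E_{\sigma,\omega} \rho E_{\sigma,\omega}^\dagger$ is applied to a diagonal $\rho$. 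Computing $\tr(P_a \rho)$ at the end yields $f_\mathcal{P}(w)$ exactly.

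The construction is routine; the only point that requires care is the choice of register symbols and their $D_r$ partition. Giving each source-target pair its own tag is essential for orthogonality (a tag depending only on the target $(q', \gamma')$ would in general leave nonzero cross terms in $E_{\sigma,\omega}^\dagger E_{\sigma,\omega}$), and steering each tag into the correct $\Omega_n$-subfamily according to the target state's classical head directions is what lets the register measurement replace the PTM's classical control flow without residual quantum interference. I do not foresee any genuine obstacle.
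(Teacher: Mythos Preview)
Your construction is essentially identical to the paper's: the paper also introduces one register symbol $\omega_{(q,\gamma,q',\gamma')}$ per source--target pair, places it in $\Omega_{(a,\downarrow,\downarrow)}$, $\Omega_{(r,\downarrow,\downarrow)}$, or $\Omega_{(n,D_i(q'),D_w(q'))}$ according to whether $q'$ is accepting, rejecting, or nonhalting, and sets the amplitude to $\sqrt{\delta_{\mathcal{P}}(q,\sigma,\gamma,q',\gamma')}$. Your write-up is in fact more complete than the paper's, which simply states the construction without spelling out the wellformedness check or the diagonal-preservation argument.
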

\begin{proof}
Let $ \mathcal{P} = (Q, \Sigma, \Gamma, \delta_{\mathcal{P}},q_{1}, Q_{a}, Q_{r}) $ be a PTM. We build a CQTM
$ \mathcal{M} = (Q, \Sigma, \Gamma, \Omega, \delta_{\mathcal{M}},q_{1},\Delta ) $.
For each $ (q,\gamma,q^{\prime},\gamma^{\prime}) \in Q \times \Gamma
\times Q \times \Gamma $,
we define a register symbol $ \omega_{(q,\gamma,q^{\prime},\gamma^{\prime})} $ such that
\begin{enumerate}
       \item if $ q^{\prime} \in Q_{a} $: $ \omega_{(q,\gamma,q^{\prime},\gamma^{\prime})} \in
\Omega_{(a,\downarrow,\downarrow)} $;
       \item if $ q^{\prime} \in Q_{r} $: $ \omega_{(q,\gamma,q^{\prime},\gamma^{\prime})} \in
\Omega_{(r,\downarrow,\downarrow)} $;
       \item if $ q^{\prime} \in Q_{n} $: $ \omega_{(q,\gamma,q^{\prime},\gamma^{\prime})} \in
\Omega_{(n,D_{i}(q^{\prime}), D_{w}(q^{\prime}))} $.
\end{enumerate}
We conclude with setting
\begin{equation}
       \delta_{\mathcal{M}}(q,\sigma,\gamma,q^{\prime},\gamma^{\prime},\omega_{(q,\gamma,q^{\prime},\gamma^{\prime})}) =
       \sqrt{ \delta_{\mathcal{P}}(q,\sigma,\gamma,q^{\prime},\gamma^{\prime})}
\end{equation}
for each $\sigma \in \Sigma$, and setting the values of $\delta_{\mathcal{M}}$ that are still undefined to zero.
\qed\end{proof}
This result is also valid for two-way and real-time finite automata:
\begin{corollary}
       2CQFAs (RT-QFAs) can simulate any 2PFA (RT-PFA) exactly.
\end{corollary}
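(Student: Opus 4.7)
The plan is to specialize the construction in the preceding Lemma. For the 2CQFA versus 2PFA half, one drops every reference to the work tape from the simulation: given $\mathcal{P}=(Q,\Sigma,\delta_\mathcal{P},q_1,Q_a,Q_r)$ I would build $\mathcal{M}=(Q,\Sigma,\Omega,\delta_\mathcal{M},q_1,\Delta)$ by introducing, for each pair $(q,q')\in Q\times Q$, a distinct register symbol $\omega_{(q,q')}$, classified as lying in $\Omega_{(a,\downarrow,\downarrow)}$, $\Omega_{(r,\downarrow,\downarrow)}$, or $\Omega_{(n,D_i(q'),\downarrow)}$ according to whether $q'\in Q_a$, $Q_r$, or $Q_n$, and then setting
\begin{equation}
\delta_\mathcal{M}(q,\sigma,q',\omega_{(q,q')}) \;=\; \sqrt{\delta_\mathcal{P}(q,\sigma,q')},
\end{equation}
with all other amplitudes zero.

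Well-formedness follows from the local condition in Figure \ref{figure:uni2QFA-wellformedness}: for fixed $\sigma\in\tilde\Sigma$ and $q_1,q_2\in Q$, the sum
\begin{equation}
\sum_{q',\omega}\overline{\delta_\mathcal{M}(q_1,\sigma,q',\omega)}\,\delta_\mathcal{M}(q_2,\sigma,q',\omega)
\end{equation}
reduces, because of the bijective tagging by $\omega_{(q,q')}$, to $\sum_{q'}\sqrt{\delta_\mathcal{P}(q_1,\sigma,q')\delta_\mathcal{P}(q_2,\sigma,q')}$, which evaluates to $\sum_{q'}\delta_\mathcal{P}(q_1,\sigma,q')=1$ when $q_1=q_2$ and to $0$ otherwise, since the ranges of $\omega_{(q_1,\cdot)}$ and $\omega_{(q_2,\cdot)}$ are disjoint. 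A straightforward induction on the number of steps then shows that, after each transition, the density matrix of $\mathcal{M}$ is diagonal with the current state distribution of $\mathcal{P}$ on its diagonal, and the head positions in both machines agree; in particular the accepting/rejecting probabilities match exactly at every step.

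For the real-time half (RT-QFAs simulating RT-PFAs), the cleanest route is to appeal to the general fact already recalled in Section 3.2 that any quantum model built on superoperators efficiently simulates its classical counterpart; applied to the RT-QFA definition this yields the claim directly. Alternatively, I would exhibit the simulation explicitly: given an RT-PFA with stochastic matrices $A_\sigma$, define, for each $\sigma\in\tilde\Sigma$, the superoperator $\mathcal{E}_\sigma$ with Kraus elements
\begin{equation}
E_{\sigma,(i,j)} \;=\; \sqrt{A_\sigma[i,j]}\,\ket{q_i}\bra{q_j},\qquad 1\le i,j\le |Q|.
\end{equation}
Well-formedness is immediate from column-stochasticity:
\begin{equation}
\sum_{i,j}E_{\sigma,(i,j)}^{\dagger}E_{\sigma,(i,j)} \;=\; \sum_{j}\Bigl(\sum_i A_\sigma[i,j]\Bigr)\ket{q_j}\bra{q_j} \;=\; I.
\end{equation}
Taking the same accepting set $Q_a$ and the projector $P_a=\sum_{q_i\in Q_a}\ket{q_i}\bra{q_i}$, one checks by induction that $\rho_j$ is diagonal with $\rho_j[i,i]=v_j[i]$, and therefore $f_\mathcal{M}(w)=\tr(P_a\rho_{|\tilde w|})=f_\mathcal{P}(w)$.

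There is no real obstacle beyond bookkeeping; the only delicate point is ensuring the tagging of Kraus (resp. register) elements is fine enough that the cross-terms in the well-formedness sum vanish, which the pairwise-distinct indexing above guarantees automatically.
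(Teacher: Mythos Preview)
Your proposal is correct and matches the paper's approach: the paper presents the corollary as an immediate specialization of the preceding CQTM-versus-PTM lemma, obtained by dropping all work-tape references, and you have spelled out precisely those details (including the explicit Kraus-operator simulation for the real-time case, which is standard). One cosmetic slip: in the 2CQFA construction you write $\Omega_{(n,D_i(q'),\downarrow)}$, but a 2CQFA has no work tape, so the outcome labels should involve only the input-head direction; this does not affect the argument.
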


\bibliographystyle{plain}
\bibliography{YakaryilmazSay}

\end{document}